\documentclass[12pt]{article}
\usepackage{amsfonts}
\usepackage{amssymb, amsthm, amsbsy, amscd, amsfonts, latexsym, euscript,epsf,stackrel}
\usepackage{amscd}
\usepackage{amsmath}
\usepackage{authblk}

\usepackage{tikz-cd} 

\usepackage[utf8]{inputenc}
\usepackage[T2A]{fontenc}
\usepackage{xcolor}

\usepackage{graphicx}
\graphicspath{{pics/}}

\def\bea{\begin{eqnarray}}
\def\eea{\end{eqnarray}}
\def\nn{\nonumber}

\def\R{\mathbb{R}}
\def\C{\mathbb{C}}
\def\Z{\mathbb{Z}}

\newcommand{\cA}{\mathcal A}

\newcommand{\cC}{\mathcal C}
\newcommand{\cAA}{  A}

\newcommand{\cB}{\mathcal B}

\newcommand{\cI}{\mathcal I}

\newcommand{\cM}{\mathcal M}

\newtheorem{thm}{Theorem}[section]

\newtheorem{lemma}[thm]{Lemma}

\newtheorem{prop}[thm]{Proposition}
\theoremstyle{definition}
\newtheorem{defn}[thm]{Definition}
\theoremstyle{rem}
\newtheorem{rem}[thm]{Remark}

\DeclareMathOperator{\LHS}{LHS}
\DeclareMathOperator{\RHS}{RHS}

\DeclareMathOperator{\Span}{span}

\DeclareMathOperator{\lm}{lm}

\DeclareMathOperator{\T}{T}

\textwidth=19cm  \textheight=22cm
\topmargin=-20mm
\oddsidemargin=0mm \sloppy
\hoffset=-10mm

\title{Quantisation ideals, canonical parametrisations of the unipotent group and consistent integrable systems}

\author[1,2]{M.~A. Chirkov\footnote{mikhlchirkov@gmail.com}}
\author[3]{A.~V. Mikhailov\footnote{A.V.Mikhailov@leeds.ac.uk}}
\author[1,4,5]{Dmitry V. Talalaev\footnote{dtalalaev@yandex.ru}}
\affil[1]{\small Demidov Yaroslavl State University, Yaroslavl, Russia, 150003, Sovetskaya Str. 14
}

\affil[2]{\small HSE University, Moscow, Russia
}

\affil[3]{\small School of Mathematics, University of Leeds, Leeds, UK
}

\affil[4]{\small Lomonosov Moscow State University, 
119991, Moscow, Russia.
}

\affil[5]{\small Kurchatov Institute -- ITEF, Bolshaya Cheremushkinskaya str. 25, 117218, Moscow, Russia.
}

\begin{document}
\maketitle

\abstract{Using the method of quantisation ideals, we construct a family of quantisations corresponding to Case $\alpha$ in Sergeev’s classification of solutions to the tetrahedron equation. This solution describes transformations between special parametrisations of the space of unipotent matrices with noncommutative coefficients. We analyse the classical limit of this family and construct a pencil of compatible Poisson brackets that remain invariant under the re-parametrisation maps (mutations).
This decomposition problem is closely related to Lusztig’s framework, which makes links with the theory of cluster algebras.
Our construction differs from the standard family of Poisson structures in cluster theory; it provides  deformations of  log-canonical brackets. Additionally, we identify a family of integrable systems defined on the parametrisation charts, compatible with mutations.
}
\tableofcontents


\section{Introduction}
\label{Intro}
\paragraph{Quantisation ideals.}
The quantisation ideals approach was originally developed for differential and differential-difference dynamical systems \cite{M}. It differs significantly from the conventional quantisation framework. Traditionally, one starts with a classical dynamical system defined on a commutative algebra of functions, and quantisation is viewed as a deformation of the commutative multiplication into a non-commutative, associative product that in the classical limit reproduces the Hamiltonian structure of the system.

In contrast, the quantisation ideals method is applied to systems defined on a free associative algebra. The central idea is to identify two-sided ideals that are invariant under the dynamical flow and such that the corresponding factor algebras admit a Poincar\'e--Birkhoff--Witt (PBW) basis. Such an ideal is referred to as a quantisation ideal, and it defines the commutation relations in the resulting quantum algebra.

This method not only recovers quantisations with classical (commutative) limits but also allows for the construction of quantum dynamical systems with no classical counterpart - such as those involving fermionic degrees of freedom. It has been successfully applied to a wide range of integrable systems, including the Volterra chain \cite{CMW22}, stationary KdV flows \cite{BM}, the Euler top in the external field \cite{MS}, as well as the Toda, Kaup and Ablowitz--Ladik systems, among many others. We refer to this framework as the method of quantisation ideals.

In this paper, we extend the method to the setting of discrete dynamics. Our aim is to find possible quantum solutions to the Zamolodchikov tetrahedron equation \cite{Zam}:
\begin{equation}\label{zameqI}
\T_{123} \circ \T_{145} \circ \T_{246} \circ \T_{356} = \T_{356} \circ \T_{246} \circ \T_{145} \circ \T_{123},
\end{equation}
associated with the following invertible polynomial ring homomorphism $\T \,:\, \C[x', y', z']\mapsto\C[x, y, z]$, defined by the map
\bea\label{reparam}
x' &=& x,\nn\\
y' &=& y + xz, \\
z' &=& z \nn,
\eea
which provides a solution to equation (\ref{zameqI}) in the polynomial ring $\C[ x_1,x_2,x_3,x_4,x_5,x_6]$,  where
\begin{equation}\label{TijkI}
\T_{ijk}(x_l) = \begin{cases}
x_j + x_i x_k, l = j, \\
x_l, l \ne j.
\end{cases}
\end{equation}
This solution, along with the corresponding map, appears in Sergeev’s classification \cite{Serg} as Case $\alpha$.

The map $\T$ arises in the decomposition problem for the unipotent group as a product of its one-parameter subgroups. This problem is closely related to Lusztig’s work \cite{Lus} on the positive parametrisation of elements in the unipotent subgroup $N(n,\R)$, where $N(n,\R)$ denotes the group of upper triangular $n \times n$ real matrices with ones on the diagonal.

Let $u_{ij}(t)$ denote the elements of the one-parameter subgroups of $N(n,\R)$:
\begin{equation*}
u_{ij}(t) = I_n + t E_{ij}, \qquad i < j,
\end{equation*}
where $I_n$ is the identity matrix and $E_{ij}$ is the elementary matrix unit with a $1$ in the $(i,j)$-entry and zeros elsewhere.
In the simplest case $n=3$ there are two types of parametrisations of an element $a\in N(3,\R)$:
\bea
\label{deco}
 a=\left(\begin{array}{ccc}
1 & x' & y'\\
0 & 1 & z'  \\
0 & 0 & 1 
\end{array}\right)=u  _{12}(x) u  _{13}(y) u_{23}(z)=u_{23}(z') u  _{13}(y') u  _{12}(x').
\eea
The coordinates of these charts are related by the transformation \eqref{reparam}.

The same formulas remain true if we consider the decomposition problem (\ref{deco}) in the group $N(3,\cAA)$, where  $\cAA=\C\langle x,y,z\rangle$ is the associative unital free algebra generated by noncommutative variables $x,y,z$.  Remarkably, the transformations $\T_{i,j,k}$ (\ref{TijkI}) provide a solution to the Zamolodchikov tetrahedron equation within the free associative algebra $\cA=\C\langle x_1,x_2,x_3,x_4,x_5,x_6\rangle$.

This result follows from the uniqueness of the decomposition of a generic element $a\in N(4,\cA)$
into the product:
\bea
a=u  _{12}(t_1) u  _{13}(t_2) u_{23}(t_3) u  _{14}(t_4) u_{24}(t_5) u_{34}(t_6).\nn
\eea
Details of this construction and its implications are discussed in Section~\ref{sec2}.

Factorisations in the group $N(4, \cA)$ were previously studied in \cite{BR} in the context of noncommutative Bruhat cells. In particular, Section 3.3 of that work presents explicit formulas for recovering the parameters of the factorisation in terms of quasideterminants.

Our first main result is the classification of quadratic quantisation ideals $I \subset\cAA= \C\left<x, y, z\right>$, that is, ideals satisfying the following two conditions:
\begin{itemize}
\item
The ideal $I$ is invariant under the mutation map (\ref{reparam}).
\item
The quotient algebra $\cAA_I=\cAA/I$ admits a PBW basis consisting of normally ordered monomials.
\end{itemize}

These conditions ensure that the resulting quantum algebra $\cAA_I$ has the same polynomial growth as the commutative polynomial ring in three variables, and that the re-parametrisation map \eqref{reparam} is well defined on $\cAA_I$.We show that there are exactly three distinct quantisation ideals satisfying these conditions (Theorem \ref{thm-2-1}). Notably, one of these ideals defines a quantum algebra that remains non-commutative for all choices of quantum parameters, meaning that the corresponding quantum system does not admit any classical (commutative) limit.

Next, we study quantisation ideals associated with the unipotent group $N(4, \cA)$. Specifically, we identify two-sided ideals $I \subset \cA = \C\left<x_1, \ldots, x_6\right>$ that satisfy the following conditions:
\begin{itemize}
\item
The ideal $I$ is stable under all maps $\T_{i,j,k}$ appearing in the Zamolodchikov equation (\ref{zameqI}).
\item
The quantum algebra $\cA_I = \cA / I$ admits a PBW basis.
\end{itemize}
Solving the classification problem for triangular quantisation ideals of generic type, we find four essentially distinct ideals. In addition, we construct an explicit example of a non-generic quantisation ideal.
All of these quantisation ideals are homogeneous deformations of the toric ideal, and the corresponding quantum algebras admit a classical (commutative) limit.

We conclude the paper by studying the classical limit of this family of quantisations. In this limit, we construct a pencil of compatible Poisson brackets on the polynomial ring $\C\left[x_1, \ldots, x_6\right]$ that are preserved by the maps $\T_{ijk}$. We also identify a corresponding family of integrals of motion for the induced dynamics. This leads to a class of integrable systems on the unipotent group $N(4, \C)$, which are consistent with the mutation maps defined by the Zamolodchikov tetrahedron equation.


\section{Charts of the Unipotent group and symmetries}\label{sec2}
\subsection{Parameterisations of the unipotent group}
\label{sec_param}
We consider the problem of parametrisation of  $M\in Mat_4(\cA )$ of the form
\bea\label{M4}
M=\left(\begin{array}{cccc}
1 & x_1 & x_2 & x_4\\
0 & 1 & x_3 & x_5\\
0 & 0 & 1 & x_6\\
0 & 0 & 0 & 1
\end{array}\right),
\eea
by the one-parameter subgroups, generated by $u_{ij}(t)$ from the Introduction.
Using the re-parametrisation $T_{ijk}$ we get the following sequence of decompositions
\bea
M&=&u_{34}(x_6) u_{24}(x_5) u  _{14}(x_4){\color{blue} u_{23}(x_3) u  _{13}(x_2) u  _{12}(x_1)}\nn\\
&=&u_{34}(x_6) {\color{magenta}u_{24}(x_5) u  _{14}(x_4) u  _{12}(x'_1)} u  _{13}(x'_2) u_{23}(x'_3)\nn\\
&=&{\color{green}u_{34}(x_6) u  _{12}(x''_1)} u  _{14}(x'_4) {\color{red}u_{24}(x'_5) u  _{13}(x'_2)} u_{23}(x'_3)\nn\\
&=&u  _{12}(x''_1){\color{blue}u_{34}(x_6)  u  _{14}(x'_4) u  _{13}(x'_2)} u_{24}(x'_5) u_{23}(x'_3)\nn\\
&=&u  _{12}(x''_1)u  _{13}(x''_2)  u  _{14}(x''_4) {\color{magenta}u_{34}(x'_6) u_{24}(x'_5) u_{23}(x'_3)}\nn\\
&=&u  _{12}(x''_1)u  _{13}(x''_2)  u  _{14}(x''_4) u_{23}(x''_3)  u_{24}(x''_5) u_{34}(x''_6).
\label{LHS}
\eea
The transitions from the first line to the second and from the second to the third correspond to the application of the inverse transformation $T_{123}^{-1}$ and $T_{145}^{-1}$, respectively. The transition from the third line to the fourth corresponds to the permutation of the commuting generators of the one-parameter subgroups. The transitions from the 4th line to the fifth and from the fifth to the sixth are $T_{246}^{-1}$ and $T_{356}^{-1}$, respectively.
Similarly, applying the corresponding transformations in a different order we obtain
\bea
M&=&{\color{blue} u_{34}(x_6) u_{24}(x_5) u_{23}(x_3)}u  _{14}(x_4)  u  _{13}(x_2) u  _{12}(x_1)\nn\\
&=& u_{23}(x^*_3) u_{24}(x^*_5) {\color{magenta}u_{34}(x^*_6)u  _{14}(x_4)  u  _{13}(x_2)} u  _{12}(x_1)\nn\\
&=&u_{23}(x^*_3) {\color{green} u_{24}(x^*_5) u  _{13}(x^*_2)} u  _{14}(x^*_4) {\color{red} u_{34}(x^{**}_6)  u  _{12}(x_1)}\nn\\
&=&u_{23}(x^*_3) u  _{13}(x^*_2) {\color{blue} u_{24}(x^*_5) u  _{14}(x^*_4) u  _{12}(x_1)} u_{34}(x^{**}_6) \nn\\
&=&{\color{magenta}u_{23}(x^*_3) u  _{13}(x^*_2) u  _{12}(x^*_1)}  u  _{14}(x^{**}_4) u_{24}(x^{**}_5) u_{34}(x^{**}_6) \nn\\
&=&u  _{12}(x''_1)u  _{13}(x''_2)  u  _{14}(x''_4) u_{23}(x''_3)  u_{24}(x''_5) u_{34}(x''_6).
\label{RHS}
\eea
As a result, we get 8 different parameterisations of unipotenr matrices $M$: $\{x_1,x_2,\ldots,x_6\},$ $\{x'_1,x'_2,x'_3,x_4,x_5,x_6\},$ etc., as well as parameterisations in the second series of transformations: $\{x_1,x_2,x^*_3,x_4,x^*_5,x^*_6\},$ etc. They are presented on Fig. \ref{fig_graph}.
\begin{figure}[h!]
\label{fig_graph}
\center
\includegraphics[]{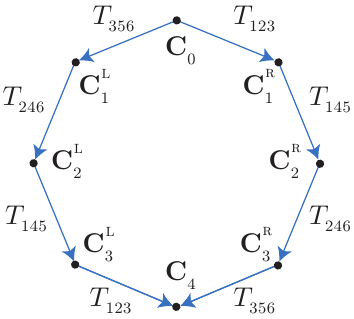}
\caption{The graph of parametrisation charts}
\end{figure}
The parametrisation $C_1^R$, for example, corresponds to $\{x'_1,x'_2,x'_3,x_4,x_5,x_6\},$ and $C_3^L$ to $\{x_1^*,x_2^*,x_3^*,x_4^{**},x_5^{**},x_6^{**}\}.$
The last parametrisations in both chains coincide due to the tetrahedron equation. 

\subsection{Involution and other orders}
\label{inv}
Recall that for the matrix ring with coefficients in the associative algebra there is the following homomorphism
\bea
\theta:Mat_n(\mathcal{A})\rightarrow Mat_n(\mathcal{A}^{op});\qquad \theta(M)^i_j=M^{n-j+1}_{n-i+1};\nn
\eea
this is the reflection of the matrix with respect to the antidiagonal. It should also be noted that $\mathcal{A}^{op}$ is the same vector space as the algebra $\mathcal{A}$, but with inverse multiplication:
\bea
a \circ_{op} b = b \circ a.\nn
\eea
We apply the homomorphism $\theta$ to the left-hand side of the tetrahedron equation \eqref{LHS} 
\bea
\theta(M)&=&u  _{12}(x_6) u  _{13}(x_5) u  _{14}(x_4) u_{23}(x_3) u_{24}(x_2) u_{34}(x_1)\nn\\
&=&u  _{12}(x_6) u  _{13}(x_5) u  _{14}(x_4) u_{34}(x'_1) u_{24}(x'_2) u_{23}(x'_3)\nn\\
&=&u  _{12}(x_6) u_{34}(x''_1) u  _{14}(x'_4) u  _{13}(x'_5) u_{24}(x'_2) u_{23}(x'_3)\nn\\
&=&u_{34}(x''_1) u  _{12}(x_6)  u  _{14}(x'_4) u_{24}(x'_2) u  _{13}(x'_5) u_{23}(x'_3)\nn\\
&=&u_{34}(x''_1)u_{24}(x''_2)  u  _{14}(x''_4) u  _{12}(x'_6) u  _{13}(x'_5) u_{23}(x'_3)\nn\\
&=&u_{34}(x''_1)u_{24}(x''_2)  u  _{14}(x''_4) u_{23}(x''_3)  u  _{13}(x''_5) u  _{12}(x''_6).
\label{LHST}
\eea
Now let's make a substitution:
\bea
\tau: x_1\leftrightarrow y_6;~x_2\leftrightarrow y_5;~x_3\leftrightarrow y_3;~x_4\leftrightarrow y_4;~x_5\leftrightarrow y_2;~x_6\leftrightarrow y_1.
\label{tau}
\eea
\bea
\tau\circ\theta(M)&=&u  _{12}(y_1) u  _{13}(y_2) u  _{14}(y_4) u_{23}(y_3) u_{24}(y_5) u_{34}(y_6)\nn\\
&=&u  _{12}(y_1) u  _{13}(y_2) u  _{14}(y_4) u_{34}(y'_6) u_{24}(y'_5) u_{23}(y'_3)\nn\\
&=&u  _{12}(y_1) u_{34}(y''_6) u  _{14}(y'_4) u  _{13}(y'_5) u_{24}(y'_5) u_{23}(y'_3)\nn\\
&=&u_{34}(y''_6) u  _{12}(y_1)  u  _{14}(y'_4) u_{24}(y'_5) u  _{13}(y'_2) u_{23}(y'_3)\nn\\
&=&u_{34}(y''_6)u_{24}(y''_5)  u  _{14}(x''_4) u  _{12}(y'_1) u  _{13}(y'_2) u_{23}(x'_3)\nn\\
&=&u_{34}(y''_6)u_{24}(y''_5)  u  _{14}(y''_4) u_{23}(y''_3)  u  _{13}(y''_2) u  _{12}(y''_1).
\label{LHSTr}
\eea
A similar expression for the right-hand side of the series of decompositions \eqref{RHS} takes the form
\bea
\tau\circ\theta(M)&=&u  _{12}(y_1) u  _{13}(y_2) u_{23}(y_3) u  _{14}(y_4)  u_{24}(y_5) u_{34}(x_6)\nn\\
&=& u_{23}(y^*_3) u_{24}(y^*_2) u  _{12}(y^*_1)u  _{14}(y_4)  u_{24}(y_5) u_{34}(y_6)\nn\\
&=&u_{23}(y^*_3)  u  _{13}(y^*_2) u_{24}(y^*_5) u  _{14}(y^*_4) u  _{12}(y^{**}_1)  u_{34}(y_6)\nn\\
&=&u_{23}(y^*_3) u_{24}(y^*_5) u  _{13}(y^*_2) u  _{14}(y^*_4) u_{34}(y_6) u  _{12}(y^{**}_1) \nn\\
&=&u_{23}(y^*_3) u_{24}(y^*_5) u_{34}(y^*_6)  u  _{14}(y^{**}_4) u  _{13}(y^{**}_2) u  _{12}(y^{**}_1) \nn\\
&=&u_{34}(y''_6)u_{24}(y''_5)  u  _{14}(y''_4) u_{23}(y''_3)  u  _{13}(y''_2) u  _{12}(x''_1).
\label{RHSTr}
\eea
These calculations imply that the charts corresponding to the 1st, 2nd, 3rd, 5th and 6th lines of the equations \eqref{LHSTr} and \eqref{RHSTr} are parametrisations in the spaces of unipotent matrices of the form
\bea
M''=\left(\begin{array}{cccc}
1 & y''_1 & y''_2 & y''_4\\
0 & 1     & y''_3 & y''_5\\
0 & 0     & 1     & y''_6\\
0 & 0     & 0     & 1
\end{array}\right).
\eea
In this case, the transformations between the charts from top to bottom are now carried out by the maps $T_{ijk}$ themselves.

\section{ Quantisations of the unipotent groups}\label{section2}

\subsection{The case $N(3,\cAA)$, classification of stable PBW ideals}\label{sec21}

Consider a free associative unital algebra $A = \mathbb{C} \left< x_1, x_2, x_3 \right>$ and its automorphism $\T$,  defined on generators by
\begin{equation}\label{Phi}
\T: A \to A,\qquad \T(x_1) = x_1,\quad \T(x_2) = x_2 + x_1 x_3,\quad \T(x_3) = x_3.
\end{equation}
It is known \cite{NCL} that this map $\T$ gives rise to a solution of the tetrahedron equation, analogous to the one in the commutative case.

By {\em quantisation}  we mean the canonical reduction to a quotient algebra $A/I$ over a two sided $\T$--stable ideal $I\subset A$ , i.e.  $\T(I) \subset I$, such that  $A/I$ admits a  basis
\begin{equation}\label{Bx}
 B =\langle  {x}_1^n {x}_2^m {x}_3^k\,|\, n,m,k\in\Z_{\geqslant
0}\rangle
\end{equation}
of normally ordered monomials. 
The ideal  $I$ and quotient algebra $A/I$ are referred to as the {\em quantisation ideal} and the {\em quantum algebra}, respectively.

For brevity, we refer to normally ordered monomials as standard. Non-standard
monomials can be expressed in terms of the standard monomial basis modulo the
ideal. Equality modulo an ideal $I$ will be denoted by
$\stackrel{I}{\equiv}$, or simply $\equiv$ when the ideal under discussion is
clear. A basis of normally ordered monomials (\ref{Bx}) we refer as a Poincar\'e--Birkhoff--Witt or a {\em PBW basis}. An ideal $I\subset A$ we refer as a {\em PBW ideal} if the quotient algebra $ A/I$ admits a PBW basis.

As a candidate for a quantisation ideal, we consider the ideal $I$ generated by the following three polynomials:
\begin{equation}
 \label{candJ3}
 I=\langle F_1:=x_2x_1-f_1,\ F_2:=x_3x_1-f_2,\ F_3:=x_3x_2-f_3\rangle\, ,
\end{equation}
where $f_k$ are general quadratic non-homogeneous polynomials expressed in terms of normally ordered monomials as:
\begin{equation}\label{fk3}
 f_k= x_1^2 \alpha   _{k1}+ x_1  x_2 \alpha   _{k2}+ x_2^2 \alpha   _{k3}+
x_1  x_3 \alpha   _{k4}+ x_2  x_3 \alpha   _{k5}+ x_3^2 \alpha   _{k6}+ x_1
\beta   _{k1}+ x_2 \beta   _{k2}+ x_3 \beta   _{k3}+\gamma _k \, ,
\end{equation}
with $\alpha_{ij},\beta_{ij},\gamma_{i}\in\C$ as arbitrary parameters. We further assume:
\begin{equation}\label{notzero}
\alpha_{12}\ne 0,\quad \alpha_{24}\ne0,\quad \alpha_{35}\ne0.
\end{equation}

The conditions for $\T$--stability of the ideal and the existence of the PBW basis in the quotient algebra $A/I$ impose constraints on these parameters.

\begin{thm}\label{thm-2-1}
An ideal $I$ (as defined in   (\ref{candJ3}), (\ref{fk3}), and (\ref{notzero})) is $\T$--stable and PBW if and only if it is
generated by one of the following sets of polynomials:
\paragraph{Case 1:}
\begin{equation}\label{case1}
\begin{array}{l}
     x_2\,x_1- x_1\,x_2-x_1^2 \alpha _{11}-x_1\,x_3
\alpha _{14}-x_3^2 \alpha _{16}-x_1 \beta _{11}-x_3 \beta
_{13}-\gamma _1 ,\\x_3\,x_1-
x_1\,x_3 ,\\x_3\,x_2- x_2\,x_3-x_1^2 \alpha
_{31}-x_1\,x_3 \alpha _{34}-x_3^2 \alpha _{36}-x_1 \beta
_{31}-x_3 \beta _{33}-\gamma
_3 ;
  \end{array}
\end{equation}

\paragraph{Case 2:}
\begin{equation}\label{case2}
 \begin{array}{l}
x_2\,x_1+x_1\,x_2- x_1^2 \alpha
_{11}-x_3^2 \alpha _{16}-x_3 \beta _{13}-x_1 \beta _{33}-\gamma
_1,\\x_3\,x_1+x_1\,x_3,\\x_3\,x_2+x_2\,x_3- x_1^2 \alpha
_{31}-x_3^2 \alpha _{36}-x_1 \beta _{31}-x_3 \beta _{33}-\gamma
_3  ;
  \end{array}
\end{equation}

\paragraph{Case 3:}
\begin{equation}\label{case3}
 \begin{array}{l}
   x_2\,x_1- \omega\, x_1\,x_2 - x_1^2 \alpha _{11}-x_1 \beta
_{33} ,\\x_3\,x_1- \omega\, x_1\,x_3
,\\x_3\,x_2- \omega\, x_2\,x_3-x_3^2 \alpha _{36}-x_3 \beta
_{33}  ,
  \end{array}
\end{equation}
where $\alpha_{ij}, \ \beta_{ij},\ \gamma_i$ and $\omega\ne0$ are arbitrary parameters.
\end{thm}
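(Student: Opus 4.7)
The plan is to combine two independent computations: the $\T$-stability of $I$ and the PBW property of the rewriting system induced by the three generators $F_1, F_2, F_3$. These generators have leading monomials $x_2 x_1$, $x_3 x_1$, $x_3 x_2$ in any monomial order with $x_3 > x_2 > x_1$, and hence serve as rewrite rules replacing each such pair by the normal-form expansion $f_i$ from (\ref{fk3}).

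For the PBW condition I would apply Bergman's Diamond Lemma. Among these three rules the only overlap ambiguity is the cubic monomial $x_3 x_2 x_1$, which can be reduced either as $(x_3 x_2)\, x_1 \to f_3\, x_1$ or as $x_3\, (x_2 x_1) \to x_3\, f_1$. I would compute both reductions and continue rewriting until every monomial lies in standard form $x_1^a x_2^b x_3^c$; this process terminates since each application of a rule strictly decreases the number of inversions. Equating the two normal forms yields a polynomial identity in $x_1, x_2, x_3$, and the vanishing of each standard-monomial coefficient gives the first batch of constraints on $\alpha_{ij}, \beta_{ij}, \gamma_i$.

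For $\T$-stability, since $\T$ fixes $x_1$ and $x_3$ and sends $x_2 \mapsto x_2 + x_1 x_3$, I would expand
\[
\T(F_1) \;=\; (x_2 + x_1 x_3)\, x_1 \;-\; f_1\bigl(x_1,\, x_2 + x_1 x_3,\, x_3\bigr),
\]
and similarly for $\T(F_2)$, $\T(F_3)$. The condition $\T(I)\subset I$ is equivalent to each $\T(F_k)$ reducing to $0$ modulo $I$; reducing to normal form using the rules of the previous step and setting every resulting coefficient to zero produces a second batch of polynomial equations in the same parameters.

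Finally I would solve the combined system with a case split on the leading commutator coefficients $\alpha_{12}, \alpha_{24}, \alpha_{35}$, all nonzero by (\ref{notzero}). The top-degree part of the confluence identity forces these three to share a common value (up to signs), which I parametrise by a single scalar $\omega \ne 0$; the solution variety then decomposes according to whether $\omega = 1$ (Case~1, giving \eqref{case1}), $\omega = -1$ (Case~2, giving \eqref{case2}), or $\omega$ is generic (Case~3, where the sharper $\T$-stability constraints eliminate the linear and constant terms absent from \eqref{case3}). I expect the hard part to be organisational rather than conceptual: the system from the two steps is large and coupled, and one must carefully track the possible vanishing of subexpressions such as $\omega \pm 1$ to ensure that no branch of the solution variety is lost and no spurious family is kept.
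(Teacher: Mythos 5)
Your overall approach — Bergman's Diamond Lemma (equivalently Levandovskyy's PBW criterion, which the paper cites) for the single overlap $x_3x_2x_1$, combined with $\T$-stability constraints obtained by reducing $\T(F_k)$ to normal form and setting all coefficients to zero — is exactly the machinery the paper uses, and solving the combined system with a case split on $\omega$ would reproduce the classification. One logical subtlety you should be aware of: reducing $\T(F_k)$ to a nonzero normal form only \emph{proves} $\T(F_k)\notin I$ once you know normal forms are unique (i.e.\ once the PBW condition is already imposed), which is why the paper phrases Lemma~\ref{lem1} under the explicit hypothesis that standard monomials are linearly independent; if you solve both batches simultaneously this is harmless, but it is worth stating.

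The one concrete error in the plan is the sentence attributing $\alpha_{12}=\alpha_{24}=\alpha_{35}$ to ``the top-degree part of the confluence identity.'' In the overlap $x_3x_2x_1$ the leading cubic terms of the two reductions both equal $\alpha_{12}\alpha_{24}\alpha_{35}\,x_1x_2x_3$ and cancel identically, so confluence at top degree imposes \emph{no} constraint on those three coefficients. The equalities $\alpha_{12}=\alpha_{24}=\alpha_{35}=\omega$ (together with the vanishing of $\alpha_{22},\alpha_{23},\alpha_{25},\beta_{22},\alpha_{13},\alpha_{33}$, etc.) are forced entirely by $\T$-stability — this is the content of the paper's Lemma~\ref{lem1}. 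After that reduction, a single PBW identity $\Delta_{321}=0$ remains, and it is its structure, proportional to $(\omega-1)$ on one group of terms and to $\omega^{-1}(1-\omega^2)$ on another, that produces the branching into $\omega=1$, $\omega=-1$, and generic $\omega$. If you go in expecting confluence to pin down $\omega$ first, you will find no constraint there and may misorganise the case analysis; start from $\T$-stability to normalise the ideal, then apply the Diamond Lemma once.
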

\medskip

\begin{rem}\phantom{.}
\begin{itemize}
\item \textbf{Cases 1, 2} can be viewed as a deformations of the commutative polynomial algebra $\C[x_1,x_2,x_3]$.
\item \textbf{Case 2} corresponds to a quantum algebra that has no commutative limit. It can be viewed as a deformation of the noncommutative algebra $ \C\langle x_1,x_2,x_3\rangle / \langle x_1 x_2+x_2 x_1,x_1 x_3+x_3 x_1,x_3 x_2+x_2 x_3\rangle$. The associated Poisson structure can be constructed using the techniques developed in \cite{MV}.
           \end{itemize}
\end{rem}

\medskip
{\bf Proof:} Our proof is based on two lemmas and Levandovskyy's Theorem 2.3 \cite{Levandovskyy}.
Let us first assume that the quotient algebra $A/I$ admits the PBW basis (\ref{Bx}) and find a general form of a $\T$--stable ideal. Then we will find and solve conditions for the existence of the PBW basis.

\begin{lemma}\label{lem1}
 Let the standard monomials be linearly independent in $A/I$.    Then a $\T$--stable ideal $I$ is generated by polynomials of the form
 \begin{equation}
  \label{idJ0}
 \begin{array}{l}
     x_2\,x_1-\omega x_1\,x_2-x_1^2 \alpha _{11}-x_1\,x_3
\alpha _{14}-x_3^2 \alpha _{16}-x_1 \beta _{11}-x_3 \beta
_{13}-\gamma _1 ,\\x_3\,x_1-\omega
x_1\,x_3 ,\\x_3\,x_2- \omega x_2\,x_3-x_1^2 \alpha
_{31}-x_1\,x_3 \alpha _{34}-x_3^2 \alpha _{36}-x_1 \beta
_{31}-x_3 \beta _{33}-\gamma
_3 ,
  \end{array}
 \end{equation}
where the coefficients $\alpha_{ij}, \ \beta_{ij},\ \gamma_i$ and $\omega\ne0$ are arbitrary parameters.
\end{lemma}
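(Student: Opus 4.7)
The plan is to translate each $\T$-stability condition $\T(F_i) \in I$ into polynomial identities on the parameters by reducing $\T(F_i)$ modulo $I$ into the standard monomial basis and invoking the assumed linear independence. Since $\T$ fixes $x_1$ and $x_3$ while sending $x_2 \mapsto x_2 + x_1 x_3$, a direct substitution together with the rewriting rules $x_2 x_1 \equiv f_1$, $x_3 x_1 \equiv f_2$, $x_3 x_2 \equiv f_3$ gives
\[
\T(F_2) \equiv f_2 - \T(f_2), \qquad \T(F_1) \equiv x_1 f_2 - \bigl(\T(f_1) - f_1\bigr), \qquad \T(F_3) \equiv f_2 x_3 - \bigl(\T(f_3) - f_3\bigr) \pmod{I}.
\]

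First I would handle $\T(F_2)$. Each difference $\T(f_k) - f_k$ is driven entirely by the $x_2$-content of $f_k$; in particular the only source of a degree-$4$ standard monomial in $\T(f_2) - f_2$ is the term $\alpha_{23}\bigl((x_2 + x_1 x_3)^2 - x_2^2\bigr)$, whose nonstandard component $(x_1 x_3)^2 \equiv x_1 f_2 x_3$ contributes $\alpha_{23}\alpha_{24} x_1^2 x_3^2$ after reduction. Since $\alpha_{24} \ne 0$ by hypothesis, linear independence forces $\alpha_{23} = 0$; the remaining summands of $\T(f_2) - f_2$ are then already standard and mutually independent, giving $\alpha_{22} = \alpha_{25} = \beta_{22} = 0$. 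The same degree-$4$ argument applied to $\T(F_1)$ and $\T(F_3)$ via the $x_2^2$-coefficients $\alpha_{13}$ and $\alpha_{33}$ yields $\alpha_{13} = \alpha_{33} = 0$, reducing what is left to identities of degree $\le 3$.

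To finish I would substitute the partially simplified $f_2$ into $x_1 f_2$ and $f_2 x_3$ and match standard-monomial coefficients. From $\T(F_1) \equiv 0$ I expect $\alpha_{21} = \beta_{21} = \gamma_2 = 0$ together with the coincidence relations $\alpha_{12} = \alpha_{24}$, $\alpha_{15} = \alpha_{26}$, $\beta_{12} = \beta_{23}$; symmetrically $\T(F_3) \equiv 0$ gives $\alpha_{32} = \beta_{32} = \alpha_{26} = \beta_{23} = 0$ and $\alpha_{35} = \alpha_{24}$. Combining these yields $\alpha_{15} = \beta_{12} = 0$ and $f_2 = \omega\, x_1 x_3$ with $\omega := \alpha_{24} = \alpha_{12} = \alpha_{35}$, recovering exactly the form (\ref{idJ0}); note that the parameters $\alpha_{11}, \alpha_{14}, \alpha_{16}, \beta_{11}, \beta_{13}, \gamma_1$ and their $f_3$-analogues never appear in $\T(f_i) - f_i$ and therefore remain free. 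The main obstacle is the bookkeeping of the degree-$4$ monomials arising from $(x_1 x_3)^2 \equiv x_1 f_2 x_3$; once the three $x_2^2$-coefficients $\alpha_{13}, \alpha_{23}, \alpha_{33}$ are killed, the remainder of the system is triangular in the monomial degree and unravels by inspection.
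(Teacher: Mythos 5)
Your proposal is correct and follows essentially the same route as the paper: isolate the degree-$4$ standard monomial $x_1^2 x_3^2$ coming from $x_1 x_3 x_1 x_3 \equiv x_1 f_2 x_3$ to kill $\alpha_{23}$ (and by the identical argument $\alpha_{13}$, $\alpha_{33}$), then match the remaining degree-$\le 3$ standard-monomial coefficients in $\T(F_1)\equiv x_1 f_2-(\T(f_1)-f_1)$ and $\T(F_3)\equiv f_2 x_3-(\T(f_3)-f_3)$ to derive $f_2=\omega x_1 x_3$ and the vanishing of the non-free coefficients of $f_1, f_3$. The only cosmetic difference is that you state the congruences $\T(F_k)\equiv\dots$ in a consolidated form rather than writing out the raw expansion; you also correctly write $x_1 f_2 x_3$ where the paper has a typo ($x_1 f_1 x_3$).
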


\medskip

{\bf Proof of Lemma \ref{lem1}} The stability conditions
$\T(F_k)\stackrel{I}{\equiv} 0,\ k=1,2,3$ yield a system of
polynomial equations for the coefficients of  $F_k$. It follows from the form of polynomials $F_k$ that linear combinations of standard monomials do not belong to the ideal $I$. Consider
\[
 \T(F_2)=F_2-x_1 x_3 x_1 x_3 \alpha _{23}-x_1^2 x_3 \alpha _{22}-x_1 x_3 x_2 \alpha _{23}-x_2 x_1x_3 \alpha _{23}-x_1 x_3^2 \alpha _{25}-x_1 x_3 \beta _{22}.
\]
 Using the relation
\[ x_1 x_3 x_1 x_3\equiv x_1 f_1 x_3=\alpha_{24}x_1^2x_3^2+\cdots \gamma_2x_1x_3\not\in J,\quad \alpha_{24}\ne0,
\]
we conclude
that $\T(F_2)\in I$
implies
\[\alpha _{23}= \alpha _{22}  = \alpha _{25}=\beta_{22}=0,\]
and thus
\begin{equation}\label{f2}
 x_3x_1\equiv f_2=x_1 x_3 \alpha _{24}+x_1^2 \alpha _{21}+x_3^2 \alpha _{26}+x_1 \beta _{21}+x_3 \beta _{23}+\gamma _2.
\end{equation}
Similarly, the conditions $\T(F_1),\T(F_3)\in I$ imply
\[\alpha_{13}=\alpha_{33}=0,\] and therefore
\begin{eqnarray}\label{F1}
 &&\T(F_1)\equiv-x_1^2 x_3 \alpha _{1,2}-x_1 x_3^2 \alpha _{1,5}-x_1 x_3 \beta _{1,2}+x_1 x_3 x_1,\\
 \label{F3}
&& \T(F_3)\equiv-x_1^2 x_3 \alpha _{3,2}-x_1 x_3^2 \alpha _{3,5}-x_1 x_3 \beta _{3,2}+x_3 x_1 x_3.
\end{eqnarray}
We use $x_1 x_3 x_1\equiv x_1 f_2$ and $x_3 x_1 x_3\equiv f_2 x_3$ to present (\ref{F1}), (\ref{F3}) in the basis of standard monomials:
\[\begin{array}{l}
 \T(F_1)\equiv x_1 ^2 x_3 \left(\alpha _{24}-\alpha _{12}\right)+x_1 x_3^2 \left(\alpha _{26}-\alpha _{15}\right)+x_1^3 \alpha _{21}+x_1 x_3 \left(\beta _{23}-\beta _{12}\right)+x_1^2 \beta _{21}+\gamma _2 x_1,\\
 \T(F_3)\equiv x_1 ^2 x_3 \left(\alpha _{21}-\alpha _{32}\right)+x_1 x_3^2 \left(\alpha _{24}-\alpha _{35}\right)+x_3^3 \alpha _{26}+x_1 x_3 \left(\beta _{21}-\beta _{32}\right)+x_3^2 \beta _{23}+\gamma _2 x_3.
  \end{array}
\]
Since we assume that the standard monomials are linearly independent, we get
\[\alpha _{12}=\alpha _{24}=\alpha _{35}:=\omega\ne 0,\quad \alpha _{15}
=\alpha _{21} = \alpha _{26} =\alpha _{32} =\beta _{12} =\beta _{21} =\beta
_{23} =\beta _{32} =\gamma _2= 0.\qquad \Box
\]
In algebra $A/I$ with $I$ generated by polynomials of the form  (\ref{idJ0}) we can choose any ordering of the generators as a normal ordering. Moreover, the conditions for the existence of a PBW basis do not depend on the ordering.
\begin{lemma}\label{lem_ordering}
 Let an ideal $I\subset A$ be generated by polynomials  (\ref{idJ0}) and $\omega\ne 0$. Then the quotient algebra $A/I$ admits a PBW basis
$ B =\langle  {x}_1^n {x}_2^m {x}_3^k\,|\, n,m,k\in\Z_{\geqslant
0}\rangle    $ if and only if it admits a PBW basis $ B_\sigma =\langle  {x}_{\sigma(1)}^n {x}_{\sigma(2)}^m {x}_{\sigma(3)}^k\,|\, n,m,k\in\Z_{\geqslant
0}\rangle    $, where $\sigma$ is any permutation of $\{1,2,3\}$.
\end{lemma}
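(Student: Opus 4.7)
The plan is to decouple the PBW property into two parts: a spanning statement, which holds in every ordering as soon as $\omega\ne 0$, and a linear independence condition, which amounts to an intrinsic dimension condition on $R := A/I$ that does not depend on the ordering.

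First I would note that, because $\omega\ne 0$, each of the three relations in (\ref{idJ0}) is invertible modulo $I$: the identity $x_j x_i = \omega x_i x_j + \varphi_{ji}$ can equally well be read as $x_i x_j = \omega^{-1}(x_j x_i - \varphi_{ji})$, where the correction $\varphi_{ji}$ is supported on monomials in $x_1,x_3$ only. Thus, for any permutation $\sigma$ of $\{1,2,3\}$, we have explicit rules for placing any two adjacent letters of a word in $\sigma$-order modulo $I$. An induction on total word-length then shows that the $\sigma$-normal monomials span $R_{\le N}$, the image in $R$ of the span of words of length at most $N$ in $A$. The point is that reshuffling letters within a fixed total degree terminates after finitely many adjacent swaps, while the corrections $\varphi_{ji}$ either already involve only $x_1,x_3$ (so can be $\sigma$-normalised using the single rule for the pair $(x_1,x_3)$) or are of strictly smaller total degree and are handled by the inductive hypothesis.

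The proof is then completed by a dimension count. The number of $\sigma$-normal monomials of total degree at most $N$ equals $\binom{N+3}{3}$ independently of $\sigma$, so by the spanning step $\dim_{\C} R_{\le N}\le \binom{N+3}{3}$, with equality iff the $\sigma$-normal monomials of degree at most $N$ are linearly independent, i.e.\ form a PBW basis. Since $\dim_{\C} R_{\le N}$ is an intrinsic invariant of $R$ and does not depend on $\sigma$, the condition $\dim_{\C} R_{\le N}=\binom{N+3}{3}$ for all $N$ is simultaneously equivalent to the existence of a PBW basis in every ordering $\sigma$, which is exactly the statement of the lemma.

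The only delicate point is the termination of the rewriting procedure in the spanning step, which rests on the specific shape of (\ref{idJ0}): the quadratic parts of the corrections $\varphi_{ji}$ never reintroduce $x_2$, and the subquadratic parts have strictly smaller total degree, so an induction on total degree combined with the trivial termination of reshuffling within a fixed degree closes the argument. I do not anticipate any serious obstacle beyond this bookkeeping.
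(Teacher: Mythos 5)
Your proof is correct, and notably so: the paper states Lemma~\ref{lem_ordering} without an explicit proof, while you actually supply one. Your decomposition into a spanning step (valid for every ordering once $\omega\ne0$) plus an intrinsic Hilbert-function count is the standard and clean way to see order-independence of the PBW property for a filtered algebra: the filtration $R_{\le N}=\mathrm{image\ of}\ A_{\le N}$ is defined without reference to an ordering, the $\sigma$-normal monomials of degree $\le N$ always span $R_{\le N}$, and PBW for $\sigma$ is equivalent to $\dim_\C R_{\le N}=\binom{N+3}{3}$ for all $N$, which is a $\sigma$-free condition. The one place where you are a little too quick is the phrase ``trivial termination of reshuffling within a fixed degree''. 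Since the quadratic parts of the corrections (e.g.\ $x_1^2,\ x_1x_3,\ x_3^2$) do \emph{not} drop the total degree, a single induction on degree does not terminate the rewriting; one needs an auxiliary well-founded measure. The right one is exactly what your observation supplies: because the corrections never contain $x_2$, a double induction on (total degree, number of $x_2$'s), with the inversion count of the remaining pure $\omega$-swaps as the innermost measure, terminates. It is also worth saying explicitly that both relations in (\ref{idJ0}) that carry nontrivial corrections involve $x_2$ on exactly one side, so each application of such a relation strictly decreases the $x_2$-count in the correction terms, and the pure $x_1$--$x_3$ relation is a quantum plane with no corrections at all. With this minor tightening, your argument is complete and gives an honest proof of a statement the paper only asserts.
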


To determine a criterion for the existence of a PBW basis, we will use Levandovskyy's Theorem \cite{Levandovskyy} (Theorem 2.3). To do so, we introduce the notations and definitions that will be used throughout this and subsequent sections of the paper.

Consider a free associative algebra $\cA=\C\langle s_1,\ldots,s_n\rangle$ with $n$ generators $s_1,\ldots, s_n$. Let $\cM$ denote the set of all monomials in $\cA$.
Define the set of {\em standard monomials}
\begin{equation}\label{setB}
\cB=\{s_1^{\alpha_1}s_2^{\alpha_2}\cdots s_n^{\alpha_n}\,|\, (\alpha_1,\alpha_2,\ldots,\alpha_n)\in\Z_{\geqslant 0}^n\}
\end{equation}
 and let $L_\cB=\Span_\C(\cB)$
 denotes the $\C$--linear space spanned by the standard monomials. Two monomials $a,b\in\cA$ are said to be {\em similar} $a\sim b$ if $b$ can be obtained from $a$ by a permutation of letters (e.g. $s_1^2s_3 \not\sim s_1 s_2s_3\sim s_3s_2s_1\sim s_2s_1s_3$). There is a projection $\pi_0:\, \cM\mapsto\cB$ defined by the condition $\pi_0(a)=\pi_0(b) \, \Leftrightarrow \, a\sim b$.

 \begin{defn}\label{monord}
From this point onward, we fix the following monomial ordering $\prec$ on $\cM$: the generators are ordered as $s_1 \prec \cdots \prec s_n$. For standard monomials, we use the reverse (right) lexicographical order.
 Namely, $s_1^{\alpha_1}\cdots s_n^{\alpha_n}\prec s_1^{\beta_1}\cdots s_n^{\beta_n}$ if in the difference $(\beta_1-\alpha_1,\ldots,\beta_n-\alpha_n)$ the rightmost nonzero entry is positive. We extend it to $\cM$ as following:
\begin{itemize}
 \item   $a\prec b$ if $\pi_0(a)\prec\pi_0(b)$;
 \item if $a\sim b$, we find their largest common left subword $m$ such
that 
$a=ma_1,\, b=mb_1$, or set $m=1$ if no such subword exists. Let $s_i$  and $s_j$ denote the first letters of $a_1$ and $b_1$, respectively (counting from the left). Then, $a\prec b$ if $s_i\prec s_j$.
\end{itemize}
\end{defn}
In this ordering, standard monomials are the smallest within their equivalence classes of monomials.
For example:
\begin{equation}\label{exord}
 1\prec s_1\prec s_1^2\prec s_2\prec s_1s_2\prec s_2s_1\prec s_2^2\prec s_3\prec s_1s_3\prec s_2s_3\prec s_1s_2 s_3\prec s_2s_1s_3\prec s_3s_1s_2 \, .
\end{equation}
Any non-zero $f\in\cA$ can be written uniquely as $f=c\,m+f'$, where $c\in\C^*,\, m\in\cM$ and $m'\prec m$. We define $
\lm(f):=m$ the {\em leading monomial} of  $f$, and refer to the product  $c\,m$ as the {\em leading term} of $f$.  
Suppose we have a set of $n(n-1)/2$ polynomials $f_{ij}$:
\begin{equation}\label{setF}
 \begin{array}{lll}
  F &=&\{f_{ij}\,|\, 1\leqslant i<j\leqslant n\}\subset  \cA=\C\langle s_{1},\ldots , s_{n}\rangle,\ \ \mbox{where}\\
  &&\forall\, i<j\quad f_{ij}=s_j s_i-\omega_{ij}s_i s_j-d_{ij},\ \ \omega_{ij}\in \C^*,\ \ d_{ij}\in L_{\cB},
   \end{array}
\end{equation}
such that
\begin{equation}\label{condF}
  \lm(d_{ij})\prec s_i s_j\prec\lm (f_{ij})=s_j s_i.
\end{equation}
In the polynomials $f_{ij}$, all monomials, except the leading monomial $ s_j s_i$, are normally ordered (standard). Also we consider only homogeneous quadratic polynomials $d_{ij}$. 

Any element  $a\in\cA$ can be brought to normally ordered form modulo the set of polynomials $F$. Indeed, let $m$ be the leading non-standard monomial in $a$. Since $m$ is non-standard, it must contain a pair of generators $s_i s_j$
appearing in the wrong order (i.e., with $i<j$). Thus, we can write
$m=p s_j s_i q$, where $p,q$ are  monomials. We rewrite this as $m=p( s_j s_i-f_{ij})q+pf_{ij}q$. By construction, the leading monomial of $\lm (p( s_j s_i-f_{ij})q)$ is strictly smaller than $m$ with respect to the monomial order $\prec$.
Since the number of monomials of a fixed grading is finite, the process of reordering terminates after finitely many steps.
As a result, we can express $a$ as   $a=(:a:)+\tilde{a}$, where   $(:a:)\in L_\cB$  is a normally ordered polynomial (i.e., a linear combination of standard monomials), and the remainder $\tilde{a}$ is a finite sum of the form $\tilde{a}=\sum\limits_{k}\sum\limits_{1\leqslant i<j\leqslant n} p_{kij}f_{ij}q_{kij},\ \ p_{kij},\,q_{kij}\in\cA$. In other words, treating the relations $f_{ij} \stackrel{\left<F\right>}{\equiv} 0$ as rewriting rules, we can transform any element   $a\in\cA$ into a normally ordered form $(:a:)$ in a finite number of steps.

In general, the normally ordered form of an element $a\in\cA$ is not uniquely defined unless additional conditions are imposed on the coefficients of the polynomials $f_{ij}$. These conditions are well known - see for example Theorem 2.1 in \cite{PP}. Here we will use Levandovskyy's Theorem   \cite{Levandovskyy} (Theorem 2.3), which we have adapted to suit our notations.

\begin{thm} \label{LevThm}
Let  $\cI=\langle F \rangle\subset \cA$  be the two--sided ideal    generated by a set of polynomials $F$ (\ref{setF}) satisfying conditions (\ref{condF}). Then, the following conditions are equivalent:
\begin{enumerate}
 \item The quotient $\C$--algebra $\cA/\cI$ has a Poincar\'e--Birkhoff--Witt basis $\cB$;
 \item $F$ is a Gr\"obner basis for $\cI$ with respect to the ordering $\prec$;
 \item For all $1\leqslant i<j<k\leqslant n$,
 \begin{equation}\label{pbw_conds}
      \Delta_{kji}=   (:\,(:s_k\,s_j:)\,  s_i:)-(:s_k\,(:s_j\,s_i:) :) =0.
       \end{equation}

\end{enumerate}
\end{thm}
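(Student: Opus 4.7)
The plan is to prove the equivalence via (1)~$\Leftrightarrow$~(2), handled by the definition of a noncommutative Gr\"obner basis, and (2)~$\Leftrightarrow$~(3), handled by Bergman's diamond lemma applied to the reduction system induced by $F$. The preparatory work is already done in the text preceding the statement: every element $a \in \cA$ admits a decomposition $a = (:a:) + \tilde a$ with $(:a:) \in L_\cB$ and $\tilde a \in \cI$, so $\cA = L_\cB + \cI$ holds unconditionally. Hence condition (1) amounts to saying that this sum is direct, i.e.\ $L_\cB \cap \cI = 0$.

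For (1)~$\Leftrightarrow$~(2), I would invoke the standard characterisation of a Gr\"obner basis: with respect to an admissible monomial ordering $\prec$, a set $F$ is a Gr\"obner basis of $\cI = \langle F \rangle$ iff the monomials in $\cM$ that do not contain any $\lm(f_{ij})$ as a subword descend to a $\C$--basis of $\cA / \cI$. In our situation the leading monomials are exactly $\{s_j s_i : i<j\}$, so a monomial avoids all of them iff it is normally ordered, i.e.\ lies in $\cB$. Combined with the decomposition above, this shows (1) is equivalent to (2).

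For (2)~$\Leftrightarrow$~(3), I would apply Bergman's diamond lemma to the rewriting system obtained by orienting each relation as $s_j s_i \rightarrow \omega_{ij} s_i s_j + d_{ij}$. Condition (\ref{condF}) guarantees that each rule strictly decreases the leading monomial, and since every $f_{ij}$ is homogeneous of degree two, the reduction preserves total degree and terminates on each (finite-dimensional) graded component. The diamond lemma then asserts that the system is confluent --- equivalently, that $F$ is a Gr\"obner basis --- if and only if every ambiguity resolves. Since all leading monomials have length exactly two, no leading monomial contains another as a subword, so there are no inclusion ambiguities; the overlap ambiguities are precisely the length-three words $s_k s_j s_i$ for which both $s_k s_j$ and $s_j s_i$ are left-hand sides of rules, i.e.\ $i<j<k$. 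The two alternative reductions of $s_k s_j s_i$ normalise to $(:(:s_k s_j:)\,s_i:)$ and $(:s_k\,(:s_j s_i:):)$ respectively, and their difference is $\Delta_{kji}$; hence the confluence conditions coincide exactly with (\ref{pbw_conds}).

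The main obstacle is the careful set-up and invocation of the diamond lemma; the non-obvious ingredients are (i) termination, which reduces to the well-ordering of $\prec$ on each finite-dimensional graded component combined with the compatibility of $\prec$ with multiplication on both sides; and (ii) exhaustiveness of the ambiguity enumeration, which is forced by the uniform quadratic shape of the leading monomials. The easy implication (1)~$\Rightarrow$~(3) can also be obtained directly without the diamond lemma: if $\cB$ is a basis of $\cA / \cI$ then the normal form of $s_k s_j s_i$ modulo $\cI$ is unique, so the two reduction sequences yielding $(:(:s_k s_j:)\,s_i:)$ and $(:s_k\,(:s_j s_i:):)$ must agree in $L_\cB$, giving $\Delta_{kji} = 0$. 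While logically redundant given the route above, this observation serves as an independent sanity check on the hardest direction (3)~$\Rightarrow$~(1).
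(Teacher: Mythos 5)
Your argument is correct, but be aware that the paper offers no proof of this statement to compare against: Theorem~\ref{LevThm} is imported (after a change of notation) from Levandovskyy \cite{Levandovskyy}, Theorem~2.3, and is used as a black box. What you have written is essentially the standard proof of that result: (1)~$\Leftrightarrow$~(2) via the characterisation of a noncommutative Gr\"obner basis by its normal monomials, using the unconditional decomposition $\cA = L_\cB + \cI$ established in the text, and (2)~$\Leftrightarrow$~(3) via Bergman's diamond lemma, with the observation that the quadratic leading monomials $s_j s_i$ produce no inclusion ambiguities and exactly the overlap ambiguities $s_k s_j s_i$ with $i<j<k$, whose resolvability is the vanishing of $\Delta_{kji}$. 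The supplementary direct argument for (1)~$\Rightarrow$~(3) is also sound.

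Two points deserve tightening. First, your termination argument appeals to the $f_{ij}$ being homogeneous of degree two, but the hypotheses (\ref{setF}), (\ref{condF}) only require $d_{ij}\in L_\cB$ with $\lm(d_{ij})\prec s_i s_j$, and the paper's own application (\ref{idJ0s}) has $d_{ij}$ with linear and constant terms. Termination still holds --- each rule strictly decreases the leading monomial, reductions never increase total degree, and $\prec$ admits no infinite descending chain within the finite set of monomials of bounded degree --- but the argument should be phrased that way rather than via preservation of degree. Second, before confluence is established, $(:a:)$ denotes the output of a chosen reduction sequence rather than a well-defined normal form, so condition (\ref{pbw_conds}) should be read as the vanishing of $\Delta_{kji}$ for one (equivalently, by the diamond lemma, every) choice of reduction order; your phrase ``the two alternative reductions normalise to'' silently assumes uniqueness of the normal form, which is part of what is being proved. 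Neither issue affects the validity of the proof.
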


In particular, the conditions (\ref{pbw_conds}) are necessary and sufficient for the linear independence of standard monomials in the quotient algebra $\cA/\cI$. These conditions ensure the uniqueness of normally ordered forms.
\medskip

We can use the above result to find necessary and sufficient conditions for the existence of a PBW basis of the quotient algebra $A/I$ in the case of ideals $I$ generated by polynomials (\ref{idJ0}). Let us consider the algebra $\cAA=\C\langle s_1,s_2,s_3\rangle$ and identify the variables $s_i$ with $x_i$ as
\begin{equation}\label{xtos3}
 s_1=x_1,\quad s_2=x_3,\quad s_3=x_2.
\end{equation}
Then the ideal $I\subset\cAA$ is generated by the polynomials
\begin{equation}
  \label{idJ0s}
 \begin{array}{l}
    f_{12}=s_2\,s_1-\omega
s_1\,s_2 ,\\
f_{13}=s_3\,s_1-\omega s_1\,s_3-s_1^2 \alpha _{11}-s_1\,s_2
\alpha _{14}-s_2^2 \alpha _{16}-s_1 \beta _{11}-s_2 \beta
_{13}-\gamma _1 ,\\
f_{23}= s_3\,s_2-\omega^{-1}s_2\,s_3+\omega^{-1}(s_1^2 \alpha
_{31}+s_1\,s_2 \alpha _{34}+s_2^2 \alpha _{36}+s_1 \beta
_{31}+s_2 \beta _{33}+\gamma
_3).
  \end{array}
 \end{equation}
Conditions (\ref{condF}) are clearly satisfied (see (\ref{exord})):
\[
s_1s_2 \prec s_2 s_1,\quad s_2^2\prec s_1s_3 \prec s_3 s_1 ,\quad s_2^2 \prec s_2s_3 \prec s_3 s_2.
\]
In the case $n=3$, there is only one equation (\ref{pbw_conds})
$$ \Delta_{321}=(:\,(:s_3\,s_2:)\,  s_1:)-(:s_3\,(:s_2\,s_1:) :) =0,$$
which must be satisfied. It follows from
\[
 \begin{array}{lcl}
  (:\,(:s_3\,s_2:)\,s_1:)&=&s_1\,s_2 \left(\beta _{11}-\beta _{33}\right)+s_1^2\,s_2 \left(\omega  \alpha _{11}-\alpha _{34}\right)+{s_2^2 \beta _{13}}{\omega^{-1} }+s_1\,s_2^2 \left(\alpha _{14}-\omega  \alpha _{36}\right)\\&+&{s_2^3 \alpha _{16}}{\omega^{-1} }-{s_1^3 \alpha _{31}}{\omega^{-1} }-{s_1^2 \beta _{31}}{\omega^{-1} }+{\gamma _1 s_2}{\omega^{-1} }-{\gamma _3 s_1}{\omega^{-1} }+s_1\,s_2\,s_3 \omega,\\&&\\
  (:s_3\,(:s_2\,s_1:) :)&=&s_1\,s_2 \left(\omega  \beta _{11}-\omega  \beta _{33}\right)+s_1^2s_2 \left(\omega  \alpha _{11}-\omega  \alpha _{34}\right)+s_2^2 \omega  \beta _{13}+s_1\,s_2^2 \left(\omega  \alpha _{14}-\omega  \alpha _{36}\right)\\&+&s_2^3 \omega  \alpha _{16}-s_1^3 \omega  \alpha _{31}-s_1^2 \omega  \beta _{31}+\gamma _1 s_2 \omega -\gamma _3 s_1 \omega +s_1\,s_2\,s_3 \omega,
 \end{array}
\]
that
\[
 \begin{array}{lcl}
 \Delta_{321} &=&(\omega-1)\Big(
s_1 s_2 \left(\beta _{33}-\beta _{11}\right)-s_1 s_2^2 \alpha _{14}+s_1^2s_2 \alpha _{34}\Big)\\
&+&\omega^{-1}(1-\omega^2)\Big(s_2^2 \beta _{13}+s_2^3 \alpha _{16}-s_1^3 \alpha _{31}-s_1^2 \beta _{31}+\gamma _1 s_2-\gamma _3 s_1\Big)=0.
\end{array}
\]
The above leads to three cases as stated in the theorem:
\begin{enumerate}
 \item $\omega=1$, with no conditions on the other coefficients;
 \item $\omega=-1$, with the conditions $\beta_{33}=\beta_{11}$ and $\alpha_{14}=\alpha_{34}=0$;
 \item $\beta_{33}=\beta_{11}$ and $\alpha_{14}=\alpha_{34}=\beta _{13}=\alpha _{16}=\alpha _{31}= \beta _{31}=\gamma _1=\gamma _3=0$. \hfill $\Box$
\end{enumerate}

\begin{rem}\label{remord3}
 The identification (\ref{xtos3}) defines the monomial ordering in $A$, such that the ideal $I$ in Lemma \ref{lem1} is generated by polynomials satisfying to the conditions (\ref{setF}), (\ref{condF}). Moreover, with this ordering, the leading term of any element $a\in A$ is $\T$--stable as  $\lm (\T(a))=\lm(a)$. The algebra $A$ with the ideal $I$ also admits a different monomial ordering, corresponding to the identification
 \begin{equation}\label{xtos3b}
  \hat{s}_1=x_3,\qquad \hat{s}_2=x_1,\qquad \hat{s}_3=x_2,
 \end{equation}
which similarly satisfies the above properties. This alternative monomial ordering corresponds to the involution introduced in Section \ref{inv} which correspond to reflecting the matrix in \eqref{deco} across its antidiagonal. \hfill $\Box$
\end{rem}

%
%


\subsection{The case $N(4,\cA)$, partial classification of quantisation ideals}
\label{seq-class}
Our main task is to describe the quantisation ideals of the free associative algebra $\cA = \mathbb{C} \left< x_1, \ldots, x_6 \right>$ such that the maps $\T_{ijk}$ defined as $\T$ on the corresponding components
\begin{equation}\label{extension}
\T_{ijk}(x_l) = \begin{cases}
x_j + x_i x_k, l = j, \\
x_l, l \ne j.
\end{cases}
\end{equation}
act as automorphisms.
In fact, we will consider only those maps that appear in the Zamolodchikov equation 
\begin{equation}\label{tetrahedron}
T _{123} \circ T _{145} \circ \T_{246} \circ \T_{356} = \T_{356} \circ \T_{246} \circ T _{145} \circ T _{123}.
\end{equation}
Due to the fact that $\T_{ijk}$ are homomorphisms of the free associative algebra by definition, we conclude that it is  {sufficient} to check the   identity \eqref{tetrahedron} only on the generators of the algebra $x_1, \ldots, x_6$. Checking the equality \eqref{tetrahedron} turns out to be non-trivial only for $x_4$
$$\LHS(x_4) = \RHS(x_4) = x_4 + x_1 x_5 + x_1 x_3 x_6 + x_2 x_6.$$
 
We adapt the notation from the previous chapter, namely, we introduce generators of the tensor algebra $\mathcal{A} = \mathbb{C} \left< s_1, \ldots, s_6 \right>$ with the monomial ordering described above \eqref{exord} due to the identification
\begin{equation}
s_1 = x_1, s_2 = x_3, s_3 = x_2, s_4 = x_6, s_5 = x_5, s_6 = x_4.
\end{equation}

Moving to variables $s_i$, it is convenient to change notations for the maps $\T_{ijk}$ in order to make them consistent with the indices: $\Phi_{132} = \T_{123}, \Phi_{165} = \T_{145}, \Phi_{364} = \T_{246}, \Phi_{254} = \T_{356}$. For example
\bea
\Phi_{132}(s_3) &=& s_3 + s_1 s_2=\T_{123}(x_2) = x_2 + x_1 x_3,\nn\\
\Phi_{165}(s_6) &=& s_6 + s_1 s_5,\ \ \, \Phi_{364}(s_6) = s_6 + s_3 s_4, \nn\\
\Phi_{254}(s_5) &=& s_5 + s_2 s_4.\nn
\eea 
Thus
\begin{equation}\label{phis}
\Phi_{ijk}(s_l) = \begin{cases}
s_j + s_i s_k, l = j, \\
s_l, l \ne j.
\end{cases}
\end{equation}
We introduce the set of admissible triples of indices $S = \{(1, 3, 2), (1, 6, 5), (3, 6, 4), (2, 5, 4)\}$ which correspond to the homomorphisms $\Phi_{ijk}$ that occur in the tetrahedron equation.

Recall that $\mathcal{M}$ denotes the set of all monomials in the free algebra $\mathcal{A}$, and that $\mathcal{B}$ refers to the set of standard monomials defined in (\ref{setB}). The map $\lm: \mathcal{A} \to \mathcal{M}$ assigns to each element of $\mathcal{A}$ its leading monomial with respect to the monomial ordering $\prec$ introduced in Definition \ref{monord}.
\begin{lemma}
For any $a \in \mathcal{M}$ and any admissible triple $(i, j, k) \in S$, $\lm(\Phi_{ijk}(a)) = a$
\end{lemma}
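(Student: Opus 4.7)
The plan is to expand $\Phi_{ijk}(a)$ as a polynomial and show every monomial in it other than $a$ itself is strictly smaller than $a$ in the ordering $\prec$. Write $a = s_{l_1}s_{l_2}\cdots s_{l_m}$. Since $\Phi_{ijk}$ is a homomorphism of the free algebra that fixes all generators except $s_j$, which it sends to $s_j + s_i s_k$, the expansion of $\Phi_{ijk}(a)$ is the sum over all choices (one at each position $p$ with $l_p = j$) of either keeping the letter $s_j$ or substituting $s_i s_k$. The all-keep choice reproduces $a$, and every other term is obtained from $a$ by performing $r \geq 1$ substitutions $s_j \to s_i s_k$.

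The key combinatorial input is the inspection of the admissible set
\[
S=\{(1,3,2),\ (1,6,5),\ (3,6,4),\ (2,5,4)\},
\]
in which for every triple the middle entry satisfies $j > i$ and $j > k$; that is, $s_j$ is always the letter of highest index in the substitution $s_j \to s_i s_k$. Thus each substitution decreases the total number of occurrences of $s_j$ by one and increases the occurrences of $s_i$ and $s_k$ by one each, while leaving all other letter counts unchanged.

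Now I compare the projection $\pi_0$ of a substituted term with $\pi_0(a)$. After $r$ substitutions the exponent vector changes by $-r$ at position $j$ and $+r$ at each of positions $i$ and $k$; the difference vector (substituted minus original) therefore has its rightmost nonzero entry at coordinate $j$, equal to $-r < 0$. By the right reverse-lexicographic rule in Definition \ref{monord}, this means the standard monomial obtained from the substituted term is strictly $\prec \pi_0(a)$. By the first clause of the extension of $\prec$ from $\cB$ to $\cM$ (namely, monomials with strictly smaller projection are strictly smaller), every substituted monomial is $\prec a$ in $\cM$.

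Consequently the sum $\Phi_{ijk}(a) = a + (\text{terms } \prec a)$, and $\lm(\Phi_{ijk}(a)) = a$. The only possible obstacle would be a cancellation of the term $a$ itself with another monomial in the expansion, but this cannot happen because every substituted monomial has a projection strictly below $\pi_0(a)$ and hence lies in a different equivalence class under $\sim$, so it cannot equal $a$. The argument is essentially a direct unpacking of the ordering; there is no deeper difficulty, just the observation that in each admissible triple the middle index dominates the outer two.
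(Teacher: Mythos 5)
Your proof is correct. The paper's own argument is shorter: it invokes the multiplicativity of the leading monomial, $\lm(ab)=\lm(a)\lm(b)$, together with the fact that $\Phi_{ijk}$ is an algebra homomorphism, to reduce the claim to the generators $a\in\{s_1,\ldots,s_6\}$, where it follows at once from $s_is_k\prec s_j$ (since $k<j$). You instead expand $\Phi_{ijk}(a)$ globally over all keep/substitute choices at the occurrences of $s_j$ and compare exponent vectors of the resulting monomials with that of $a$. Both proofs hinge on the same combinatorial fact, which you correctly isolate: in every admissible triple the middle index $j$ strictly exceeds both $i$ and $k$, so each substitution $s_j\mapsto s_is_k$ moves mass leftward in the exponent vector and strictly decreases the right reverse-lexicographic projection. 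Your version is a bit longer but does not rely on the multiplicativity of $\lm$, which the paper asserts without proof (it does hold for this ordering, but verifying it is an extra step); the paper's reduction to generators buys brevity. Your handling of the potential-cancellation point is also fine: since every substituted summand lands in a strictly smaller $\sim$-class, none of them can equal $a$, so the coefficient of $a$ in $\Phi_{ijk}(a)$ is exactly $1$ and $\lm(\Phi_{ijk}(a))=a$.
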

\begin{proof}
It is true that $\lm(a b) = \lm(a) \lm(b)$, $\Phi_{ijk}(a b) = \Phi_{ijk}(a) \Phi_{ijk}(b)$, so it suffices to prove the statement for $a = s_1, \ldots, s_6$. Note that $s_i s_k \prec s_j$, since $k < j$ for $(i,j,k) \in S$, then
\begin{equation*} 
\lm(\Phi_{ijk}(s_l)) = \lm(s_l + \delta_{l,j} s_i s_k) = s_l, 
\end{equation*}
where $\delta_{l,j}$ is the Kronecker delta.
\end{proof}

We begin the classification by studying ideals of {\em toric type}, denoted by $\cI_{g}$, which are generated by relations of the form
\bea
\label{toric_ideal}
g_{ij} = s_j s_i - \omega_{ij} s_i s_j, \quad \omega_{ji} = \omega_{ij}^{-1},
\eea
and are {\em  $\Phi$-stable}, that is, invariant under the automorphisms $\Phi_{i,j,k}$ for all $(i,j,k) \in S$.

Note that $\cI_{g}$ automatically satisfies the PBW condition; thus, we only need to verify its stability.
Theorem~\ref{thm-2-1} implies that several of the parameters must coincide:
\begin{equation}
\label{toric_cond}
 \begin{array}{ll}
\Phi_{132}:\ \ \omega_1 := \omega_{12} = \omega_{13} = \omega_{32}, \qquad & \Phi_{165}:\ \ \omega_2 := \omega_{15} = \omega_{16} = \omega_{65},\\
\Phi_{364}:\ \  \omega_3 := \omega_{34} = \omega_{36} = \omega_{64}, \qquad & \Phi_{254}:\ \ \omega_4 := \omega_{24} = \omega_{25} = \omega_{54}.
 \end{array}
\end{equation}
We introduce the shorthand notation $\omega_1, \ldots, \omega_4$ for simplicity. These parameters correspond to the automorphisms $\Phi_{i,j,k},\ (i,j,k) \in S$. The remaining parameters $\omega_{14}, \omega_{26}, \omega_{35}$ correspond to index pairs that do not simultaneously appear in any triple from $S$.

\begin{lemma}
\label{lem_toric_ideal}
A  $\Phi$-stable ideal of the form \eqref{toric_ideal} is uniquely determined by four parameters $\omega_1, \omega_2, \omega_3, \omega_4$ satisfying the relation
\bea\label{om14}
\omega_1 \omega_3 = \omega_2 \omega_4.
\eea
In this case, the parameters associated with the triplets in the tetrahedron equation are determined by the values of $\omega_i$ as specified in   \eqref{toric_cond}, while the remaining parameters are given by:
\bea\label{toric_cond2}
\omega_{14} = \omega_2 / \omega_1, \qquad \omega_{26} = \omega_3 / \omega_2, \qquad \omega_{35} = \omega_2 \omega_4.
\eea
\end{lemma}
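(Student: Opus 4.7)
The plan is to decompose the $\Phi$-stability analysis of $\cI_g$ for each triple $(i,j,k)\in S$ into two parts. Since $\Phi_{ijk}$ fixes every generator except $s_j$, any relation $g_{pq}$ with $j\notin\{p,q\}$ is automatically preserved, and only the relations involving $s_j$ need to be analysed. These split into \emph{internal} conditions (among $s_i,s_j,s_k$) and \emph{external} conditions (between $s_j$ and some $s_l$ with $l\notin\{i,j,k\}$).

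The internal conditions are obtained by restricting $\cI_g$ to $\mathbb{C}\langle s_i,s_j,s_k\rangle$. Each triple of $S$ satisfies $i<k<j$, and under the identification $(s_i,s_k,s_j)\leftrightarrow(x_1,x_3,x_2)$ the automorphism $\Phi_{ijk}$ becomes exactly the map $\T$ of Theorem~\ref{thm-2-1}. Applied to our purely homogeneous quadratic relations (all linear and constant terms vanishing), the three cases of that theorem collapse to the single condition $\omega_{ij}=\omega_{ik}=\omega_{jk}$ under the convention $\omega_{ji}=\omega_{ij}^{-1}$. Running over the four triples of $S$ yields precisely the equalities (\ref{toric_cond}) and introduces the four parameters $\omega_1,\ldots,\omega_4$.

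For the external conditions, fix $l\notin\{i,j,k\}$ and let $g$ denote the generating relation between $s_l$ and $s_j$. A direct computation gives
\[
\Phi_{ijk}(g)\equiv s_is_ks_l-c\cdot s_ls_is_k\pmod{\cI_g},
\]
where $c=\omega_{lj}$ if $l<j$ and $c=\omega_{jl}^{-1}$ if $l>j$. Reducing $s_ls_is_k$ to normal form via the toric relations and writing $s_l s_p\equiv c(l,p)\,s_p s_l$, stability becomes equivalent to the single scalar identity $c(l,i)\,c(l,k)=c$. Running this through the four triples in $S$ and all admissible $l$ yields two or three different expressions for each of $\omega_{14},\omega_{26},\omega_{35}$ in terms of $\omega_1,\ldots,\omega_4$; their mutual agreement is equivalent to the single constraint $\omega_1\omega_3=\omega_2\omega_4$, whereupon the three extra parameters are uniquely determined by (\ref{toric_cond2}). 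The only real obstacle is careful bookkeeping of the inversion convention $\omega_{ji}=\omega_{ij}^{-1}$ when reducing products to normal form; the underlying computation is otherwise mechanical.
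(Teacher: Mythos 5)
Your decomposition into internal conditions (handled via Theorem~\ref{thm-2-1} applied to the sub-triple $\{s_i,s_j,s_k\}$) and external conditions (relations of $s_j$ with each $s_l$, $l\notin\{i,j,k\}$, reduced to normal form) is exactly the paper's argument: the paper records $\Phi_{ijk}(g_{jr})=(\omega_{ir}\omega_{kr}-\omega_{jr})s_is_ks_r$ modulo $\cI_g$ for the external part, obtaining the twelve equations $\omega_{ni}\omega_{nk}=\omega_{nj}$, and checks their mutual consistency (which you phrase as agreement of the several expressions for $\omega_{14},\omega_{26},\omega_{35}$; the paper phrases it as commutativity of a diagram) to arrive at $\omega_1\omega_3=\omega_2\omega_4$.

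One bookkeeping slip, which you correctly anticipated as the fiddly part: with $c(l,p)$ defined by $s_ls_p\equiv c(l,p)s_ps_l$ (so $c(l,p)=\omega_{pl}$) and $c=\omega_{lj}$, the remainder $s_is_ks_l-c\,s_ls_is_k$ normalises to $(1-c\cdot c(l,i)c(l,k))\,s_is_ks_l$, so the stability condition is $c(l,i)c(l,k)=c^{-1}$, i.e. $\omega_{il}\omega_{kl}=\omega_{jl}$, not $c(l,i)c(l,k)=c$. With this correction the computation matches the paper's equations exactly and the rest of your argument goes through unchanged.
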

\begin{proof}
Let $(i,j,k) \in S$ be any admissible triple, and let $r\notin \{i,j,k\}.$
We apply $\Phi_{ijk}$ to the generators of the ideal \eqref{toric_ideal}:
\bea
\Phi_{ijk}(g_{jr})&=&(\omega_{ir}\omega_{kr}-\omega_{jr})s_i s_k s_r{\color{blue}+g_{jr}+g_{ir}s_k+\omega_{ir}s_i g_{kr};}\nn\\
\Phi_{ijk}(g_{ij})&=&(\omega_{ik}-\omega_{ij})s_i^2 s_k{\color{blue}+g_{ij}+s_i g_{ik};}\nn\\
\Phi_{ijk}(g_{kj})&=&s_i s_k^2(1-\omega_{kj} \omega_{ik}){\color{blue}+g_{kj}-\omega_{kj} g_{ik} s_k;}
\label{phi_toric}
\eea
(the terms from the ideal are marked in blue).
Using the PBW property of this ideal, we obtain, in addition to the conditions formulated in Theorem \ref{thm-2-1}, the following set of equations:
\[
\begin{array}{llll}
\omega _{12} \omega _{14}=\omega _{15}, &\omega _{13} \omega _{14}=\omega _{16}, & \omega _{24}=\omega _{26}\omega _{32},&
\omega _{25}=\omega _{12} \omega _{26},\\
\omega _{14} \omega _{24}=\omega _{34},&
\omega _{15} \omega _{25}=\omega _{35}, &\omega _{34}\omega _{32}= \omega _{35}, &
\omega _{16} \omega _{26}=\omega _{36}, \\
\omega _{35}=\omega _{13} \omega _{36}, &\omega _{64}=\omega _{1,4} \omega _{54},&
\omega _{26} \omega _{65}=\omega _{64}, &\omega _{54} \omega _{65}=\omega _{35} .
\end{array}
\]
These 12 equations can be written uniformly as
\[
 \omega_{ni}\omega_{nk}=\omega_{nj},\qquad (i,j,k)\in S,\ \ n\not \in (i,j,k),
\]
and represented by the diagram in the figure

\[
\includegraphics[scale=.3]{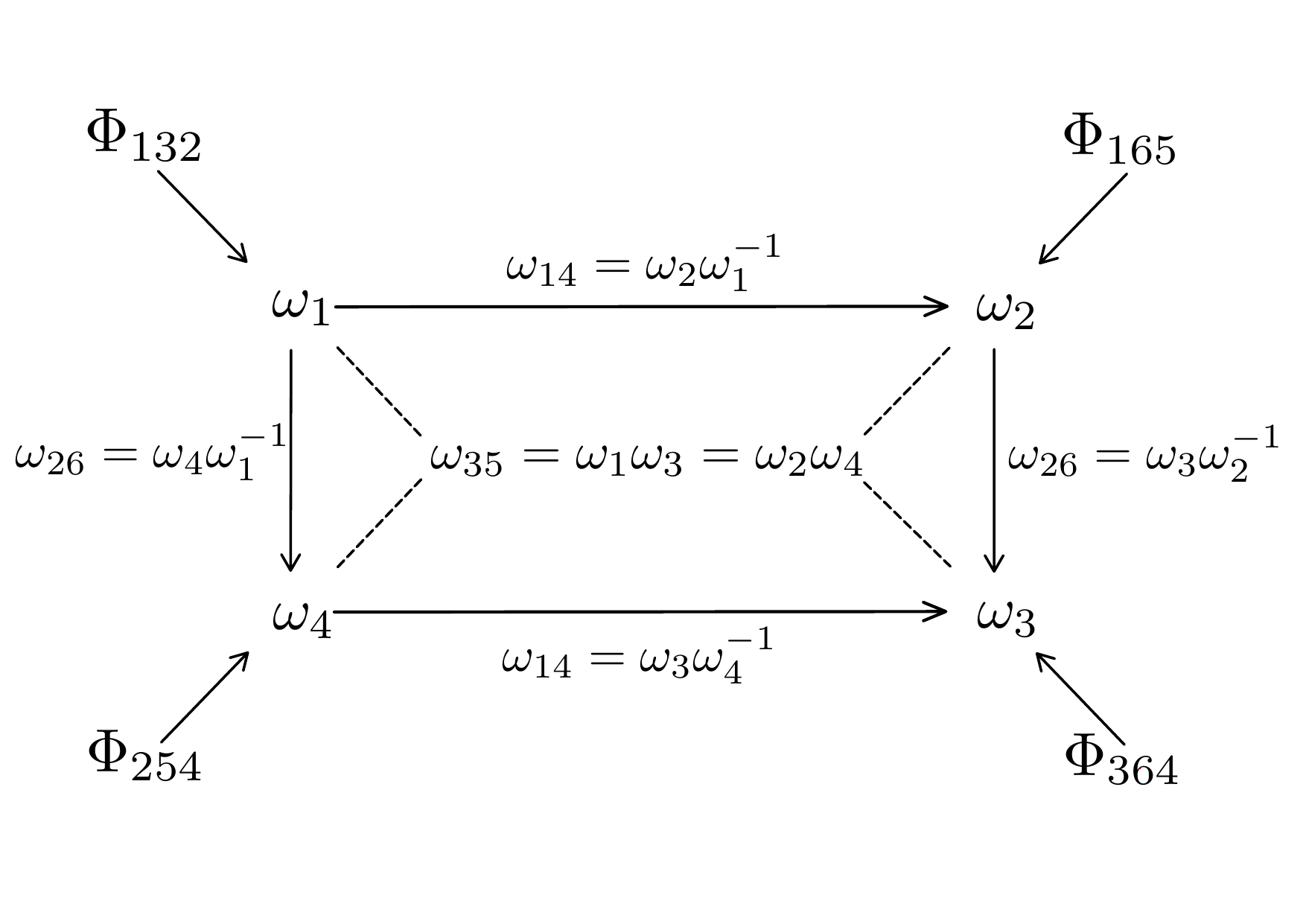}
\]

The commutativity of the diagram guarantees the possibility of restoring all the parameters of the ideal from four $\omega_1,\omega_2,\omega_3,\omega_4$ if the following is satisfied
\bea
\omega_1 \omega_3=\omega_2 \omega_4.\nn
\eea
\end{proof}
Further on we will refer to invariant toric ideal as $I_0$ given by parameters $\omega_1, \omega_2, \omega_3, \omega_4$ subject to the relation $\omega_1 \omega_3 = \omega_2 \omega_4$.

We call a homogeneous quadratic ideal $\cI \subset \cA$ triangular if it is generated by polynomials of the form
\begin{equation}\label{ideal_type} 
f_{ij} := g_{ij} - d_{ij}; 1 \leqslant i < j \leqslant 6,
\end{equation}
where
\[
g_{ij}  = s_j s_i - \omega_{ij} s_i s_j, \qquad  
d_{ij}  = \sum\limits_{(k, l) < (i, j),\, k\leqslant l} \delta_{ijkl} s_k s_l.
\]
Here, $(k, l) < (i, j)$ denotes comparison in the right lexicographic order. This means that all monomials appearing in the quadratic polynomial $d_{ij}$ are standard, and that $\lm(d_{ij}) \prec s_i s_j$.

The toric component of an arbitrary triangular $\Phi$-stable ideal satisfy the same equations as listed in Lemma \ref{lem_toric_ideal}.
\begin{lemma}\label{lem_triangle}
Let a triangular ideal $\cI$ be $\Phi$-stable and  PBW, then parameters $\omega_{ij}$ satisfy the relations (\ref{toric_cond}), (\ref{toric_cond2}) and (\ref{om14}).
\end{lemma}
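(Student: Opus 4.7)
The plan is to extract the ``leading'' part of the $\Phi$-stability constraints for the triangular ideal and argue that it reduces to the toric case handled in Lemma \ref{lem_toric_ideal}. Since $\lm(f_{ij}) = \lm(g_{ij}) = s_j s_i$ (because $\lm(d_{ij}) \prec s_i s_j \prec s_j s_i$ by construction), the leading-term skeleton of $\cI$ coincides with that of the toric ideal $I_0$ whose parameters are classified by Lemma \ref{lem_toric_ideal}.

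For each admissible triple $(i,j,k) \in S$ and each generator $f_{mn}$ of $\cI$ involving the index $j$, I would compute
\[
\Phi_{ijk}(f_{mn}) = \Phi_{ijk}(g_{mn}) - \Phi_{ijk}(d_{mn}).
\]
The cubic part of the first summand is the toric obstruction already displayed in \eqref{phi_toric}, e.g.\ $(\omega_{ir}\omega_{kr}-\omega_{jr})\, s_i s_k s_r$ and its analogues. The second summand can create cubic monomials only through the substitution $s_j \mapsto s_i s_k$ applied to the quadratic monomials of $d_{mn}$. Using the rewriting rules $s_j s_i \equiv \omega_{ij} s_i s_j + d_{ij}$, I would reduce $\Phi_{ijk}(f_{mn})$ to its normal form; by Theorem \ref{LevThm} combined with the $\Phi$-stability of $\cI$, the resulting element of the PBW basis $\cB$ must vanish identically.

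The next step is to compare coefficients of the largest cubic standard monomial in this vanishing expression. The key claim is that, because $d_{mn}$ is homogeneous quadratic with $\lm(d_{mn}) \prec s_m s_n$, the cubic contributions arising from $-\Phi_{ijk}(d_{mn})$ have, after normal ordering, leading monomials strictly below the toric obstruction in the order of Definition \ref{monord}. Hence the coefficient of the leading cubic monomial is determined entirely by the toric data $\omega_{ij}$, and its vanishing recovers the twelve equations $\omega_{ni}\omega_{nk}=\omega_{nj}$, with $(i,j,k) \in S$ and $n \notin \{i,j,k\}$, obtained in the proof of Lemma \ref{lem_toric_ideal}. The commutativity of the associated parameter diagram then forces the identifications \eqref{toric_cond}, \eqref{toric_cond2} and the compatibility relation \eqref{om14}.

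The main obstacle I expect lies in the key claim of the previous paragraph: a careful case-by-case check over the four triples of $S$ and the possible index positions of $(m,n)$ is required to ensure that no cubic term produced from the lower-order perturbation $d_{mn}$, even after further normal-ordering reductions by the full ideal $\cI$, can match the leading monomial of the toric obstruction. This is precisely where the strict triangular condition $(k,l) < (i,j)$ in the definition of $d_{ij}$ plays its essential role: the perturbations sit strictly below the toric skeleton in the monomial order, and this strictness has to be shown to survive one round of rewriting.
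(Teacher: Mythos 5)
Your proposal is correct and takes essentially the same route as the paper: split $\Phi_{ijk}(f_{mn})$ into the toric part (whose cubic obstruction coefficient is the same as in Lemma~\ref{lem_toric_ideal}) and the perturbation $\Phi_{ijk}(d_{mn})$, then argue the perturbation's cubic contributions have normal forms with leading monomials strictly below the toric obstruction monomial, so the coefficient at that monomial is determined purely by the $\omega$'s and must vanish. The obstacle you flag (strictness surviving one round of normal-ordering) is exactly what the paper settles via the commutative-diagram claim $\lm(\Pi(a))=\pi_0(a)$ together with the short explicit computations of $\mathcal{P}_3(\Phi_{ijk}(d_{lj}))$ and $\mathcal{P}_3(\Phi_{ijk}(d_{jr}))$.
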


\begin{proof}

Note that due to the homogeneity of the ideal $\cI$, the quotient algebra $\cA / \cI$ is graded. In addition, due to the PBW property, we have an isomorphism of graded vector spaces
\bea
\tau:\mathbb{C}[s_1, \ldots, s_6]\rightarrow\cA / \cI ,\nn
\eea
which is determined by the choice of the basis of ordered monomials.
Consider the projection operator onto the degree three component
\bea
P_{3,\cA/\cI}: \cA / \cI \to (\cA / \cI)_3.\nn
\eea
This operator is consistent with the projection operator $P_{3,\cA}$ in the free algebra $\cA$
\bea
\pi P_{3,\cA}= P_{3,\cA/\cI} \pi,\nn
\eea
where $\pi: \cA \to \cA / \cI$ is the canonical projection. From now on we will use the notation $P_3$ for both operators. Finally, we introduce the operator $\mathcal{P}_3 = \tau^{-1} \circ P_3 \circ \pi: \mathcal{A} \to (\mathbb{C}[s_1, \ldots, s_6])_3$, which we will work with, and $\Pi = \tau^{-1} \circ \pi: \mathcal{A} \to \mathbb{C}[s_1, \ldots, s_6]$. Recall that $\pi_0: \mathcal{M} \to \mathcal{B}$ is a map that, given an arbitrary monomial, constructs a standart monomial equivalent to it (in the sense of the introduced equivalence relation).
Thanks to the relations (\ref{phi_toric}) we can say that in the toric case
\bea
\mathcal{P}_3(\Phi_{ijk}(g_{ij}))\in\C\pi_0(s_i^2 s_k);\qquad \mathcal{P}_3(\Phi_{ijk}(g_{jr}))\in\C\pi_0(s_i s_k s_r);\qquad \mathcal{P}_3(\Phi_{ijk}(g_{kj}))\in\C\pi_0(s_i s_k^2).\nn
\eea
The relations of Lemma \ref{lem_toric_ideal} guarantee the invariance of the ideal in the toric case. Now consider the invariance conditions for the general triangular ideal.

First of all, we note that the notion of a leading monomial on a free algebra for ideals of the type under consideration can be extended to the quotient algebra and coincides with the standard monomial $\lm(\Pi(a)))=\pi_0(a)$ for any monomial $a \in \mathcal{M}$ if we consider $\mathbb{C}[s_1, \ldots, s_6] = \Span_{\mathbb{C}} \mathcal{B} \subset \mathcal{A}$ as a vector space. In other words, the following diagram is commutative.

\[\begin{tikzcd}
\mathcal{M} \arrow{r}{\pi_0} \arrow[swap]{d}{\pi} & \mathcal{B} \\
\mathcal{A}/\mathcal{J} \arrow{r}{\tau^{-1}} & \Span_{\mathbb{C}} \mathcal{B} \arrow{u}{\lm}
\end{tikzcd}
\]

We prove that $\mathcal{P}_3(\Phi_{ijk}(d_{lj}))$ for $l<j$ does not contain the monomial $\pi_0(s_i s_k s_l)$. This will mean that the condition on the coefficient of $\pi_0(s_i s_k s_l)$ for a general ideal coincides with the condition for a toric one. To do this, we show that $\lm(P_3(\Phi_{ijk}(d_{lj}))) \prec \pi_0 (s_i s_k s_l)$. In fact, let's calculate the cubic part
\begin{equation*} 
\mathcal{P}_3(\Phi_{ijk}(\sum_{(p,q) < (l,j)}\delta_{ljpq} s_p s_q)) = \mathcal{P}_3(\sum_{p < l} \delta_{ljpj} s_p s_i s_k) = \mathcal{P}_3(\sum_{p < l} \delta_{ljpj} \pi_0(s_p s_i s_k))
\end{equation*}
It is enough to restrict the order to standard monomials of the same degree, that is,  to consider the right (reverse) lexicographical ordering. 
We get
\begin{equation*} 
\lm(\mathcal{P}_3(\Phi_{ijk}(d_{lj}))) 
\prec \pi_0(s_i s_k s_l)
\end{equation*}
Let us carry out a similar argument to check that $\mathcal{P}_3(\Phi_{ijk}(d_{jr}))$ does not contain the monomial $\pi_0(s_i s_k s_r)$ for $r>j$.
\begin{equation*} 
\lm(\mathcal{P}_3(\Phi_{ijk}(d_{jr}))) 
\prec \pi_0(s_i s_k s_r)
\end{equation*}
The cases $\mathcal{P}_3(\Phi_{ijk}(d_{ij}))$ and $\mathcal{P}_3(\Phi_{ijk}(d_{jk}))$ are checked similarly.
The expressions $d_{l' r'}$ for $ l' \ne j$ and $r' \ne j$ we have $\mathcal{P}_3(\Phi_{ijk}(d_{l'r'})) = 0$, since $\Phi_{ijk}(d_{l'r'})$ does not contain cubic part.
\end{proof}

The following theorem provides a strong characterisation of triangular $\Phi$-stable ideals.
\begin{thm}\label{thm_general_pbw}
Let  $\cI$ be a triangular $\Phi$-stable PBW ideal, then $d_{ij} = 0$ for $j < 6$.
\end{thm}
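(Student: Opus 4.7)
The plan is to impose $\Phi$-stability $\Phi_{ijk}(f_{lm}) \in \cI$ for every admissible $(i,j,k) \in S$ and every pair $1 \le l < m \le 6$, reduce the resulting element to normal form modulo $\cI$ using the toric relations of Lemma \ref{lem_triangle} and PBW-independence of standard monomials, and read off linear equations in the coefficients $\delta_{lmpq}$. Since $\Phi_{ijk}$ fixes every generator other than $s_j$, the correction $\Phi_{ijk}(f_{lm}) - f_{lm}$ is nontrivial only when $j \in \{l, m\}$ (producing a cubic correction from $g_{lm}$) or when $d_{lm}$ contains a standard monomial in which $s_j$ appears (producing a cubic or, at most, quartic correction from $d_{lm}$). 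In both cases the correction is homogeneous of degree at least three and must reduce to zero coefficient by coefficient.

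I would proceed by induction on the pair $(l, m)$ in the right-lexicographic order, with $m$ as the outer loop variable. The base case $(l,m) = (1,2)$ is handled by $\Phi_{132}(f_{13})$: since $d_{13}$ contains no $s_3$, only the toric part contributes, and $\omega_{12} = \omega_{13} = \omega_1$ collapses the reduction to $s_1 d_{12} = \delta_{1211} s_1^3$, forcing $d_{12} = 0$. For the inductive step the direct constraint typically comes from $\Phi_{ijk}(f_{lj})$ with $m = j$ for some admissible triple: $m = 3$ uses $\Phi_{132}$, $m = 5$ uses $\Phi_{254}$. For $m = 2, 4$ no triple in $S$ has $j = m$, so the constraints come indirectly: for $m = 4$ the map $\Phi_{254}$ applied to $f_{l5}$ yields, after reduction using the already-known $d_{l'm'} = 0$ with $m' < 4$, an expression of the form $s_2\, d_{l4}$ modulo $\cI$, forcing $d_{l4} = 0$. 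Certain components of $d_{lm}$ (typically those carrying monomials with $s_4$ or $s_5$) are initially coupled through the constraints to components of $d_{l6}$; these couplings are resolved by supplementary applications of $\Phi_{364}$, $\Phi_{165}$ and $\Phi_{132}$ to $f_{l6}$, which force the $d_{lm}$-side of the coupling to vanish while leaving the $d_{l6}$-side as a free parameter consistent with the theorem.

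The reverse lexicographic monomial ordering ensures that each reduction is triangular: the leading standard monomial of the reduced correction depends on a single new $\delta_{lmpq}$, so the system can be solved sequentially once the right inductive order is chosen. The main obstacle is the combinatorial bookkeeping for $m \in \{4,5\}$: several reductions interact through nested substitutions such as $s_3 s_1 s_2 \equiv s_1 s_2 s_3 + \omega_1 s_1 d_{23} + d_{13} s_2$, $s_2 s_4 s_l \equiv \omega_{14}\omega_1 s_1 s_2 s_4 + \omega_{14} d_{12} s_4 + s_2 d_{14}$, and the quartic contribution $\Phi_{132}(s_3^2) = s_3^2 + s_1 s_2 s_3 + s_3 s_1 s_2 + (s_1 s_2)^2$, the last summand of which must be reduced via $d_{12} = 0$. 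The toric relations of Lemma \ref{lem_triangle} are arranged precisely so that every $\omega$-dependent cubic cross term cancels, leaving only clean linear equations on the new $\delta_{lmpq}$, from which $d_{lm} = 0$ follows for every pair with $m < 6$.
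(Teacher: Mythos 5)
Your plan reproduces the paper's method exactly: impose $\Phi$-stability, reduce the resulting element to normal form modulo $\cI$ using Lemma~\ref{lem_triangle} and the PBW-independence of standard monomials, and harvest linear equations in the $\delta_{lmpq}$. The paper's own proof is, like yours, explicitly a sketch, so the level of detail is comparable. However, the organizing principle you propose --- a clean right-lexicographic induction with $m$ as the outer loop, driven by ``direct'' constraints from $\Phi_{ijk}(f_{lj})$ with $j=m$ --- does not hold up when checked. For $m=3$ those applications give almost nothing: $\Phi_{132}(f_{13})$ constrains only $d_{12}$ (through $\delta_{1211}$, since $\Phi_{132}(f_{13})-f_{13}\equiv(\omega_{12}-\omega_{13})s_1^2s_2+s_1d_{12}$), and $\Phi_{132}(f_{23})$ yields only $\delta_{2313}=0$, not all of $d_{23}$. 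The full vanishing $d_{13}=d_{23}=0$ comes from $\Phi_{364}(f_{16})$, $\Phi_{364}(f_{26})$, $\Phi_{165}(f_{36})$ --- precisely the applications you dismiss as ``supplementary.'' The paper's actual elimination order is $(1,2),(1,4),(2,4),(1,5),(2,5),(1,3),(2,3),(3,4),(3,5),(4,5)$; i.e.\ $d_{13},d_{23}$ are handled \emph{after} the $m=4,5$ columns, contrary to your outer loop.

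Your diagnosis of the obstruction is also partly misdirected: you describe the problematic couplings as being with components of $d_{l6}$, but the relevant coupling is with other $d_{lm'}$, $m'<6$. For example, reducing $\Phi_{364}(f_{16})-f_{16}$ gives (using the toric relation $\omega_{14}\omega_{13}=\omega_{16}$) the constraint $\omega_{14}\,d_{13}\,s_4+s_3\,d_{14}\equiv0\pmod{\cI}$, a coupling of $d_{13}$ with $d_{14}$, not with $d_{16}$. It happens to decouple because only the $d_{13}s_4$ part contributes $s_4$-bearing cubic monomials, so $d_{13}=0$ can still be read off independently; but this decoupling argument is not the one you offer, and a reader following your scheme as stated would stall at $m=3$. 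In short, the correct $\Phi_{ijk}$ and $f_{lm}$ pairs all appear somewhere in your proposal, but the attribution of which application eliminates which $d_{lm}$, and the claimed inductive order, are both off for the crucial $m=3$ case.
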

\begin{proof} {(\it Sketch}) 
Assume that the ideal $\cI$ is $\Phi$-stable and satisfies the PBW condition. Then for each $(i,j,k) \in S$, we must have 
\[
\mathcal{P}_3(\Phi_{ijk}(f_{sp})) \equiv 0
\]
The idea of the proof is to successively establish that each term $d_{ij}$ vanishes for $j < 6$. The PBW condition, together with $\Phi$-invariance, imposes constraints on the coefficients $\delta_{ijkl}$ in the expressions for $d_{ij}$. At each step, we derive these constraints and observe that they reduce to trivial equations of the form $\delta_{ijkl} = 0$, which can be solved recursively. This procedure eventually leads to the conclusion that all such correction terms must vanish. \\
\textit{Step 1}.\\\\
By Lemma~\ref{lem_triangle}, the parameters $\omega_{ij}$ satisfy the necessary consistency conditions. Consider now the action of $\Phi_{132}$ on the generator $f_{13}$. We compute:
\[
\mathcal{P}_3\left(\Phi_{132}(f_{13})\right) = s_1 s_2 s_1 - \omega_{13} s_1^2 s_2 = (\omega_{12} - \omega_{13}) s_1^2 s_2 + \delta_{1211} s_1^3.
\]
Since the PBW condition requires $\mathcal{P}_3\left(\Phi_{132}(f_{13})\right) \equiv 0$, and by Lemma~\ref{lem_triangle} we already know $\omega_{12} = \omega_{13}$, we conclude:
\[
\delta_{1211} s_1^3 = 0 \quad \Rightarrow \quad \delta_{1211} = 0.
\]
Therefore, the  term $d_{12}$ vanishes, and the relation for $s_2 s_1$ is purely of toric type.

This, in turn, implies that any terms involving $s_3$ appearing in $d_{ij}$ for $i,j \neq 3$ must vanish, i.e.,
\[
\delta_{ijk3} = \delta_{ij3k} = 0 \quad \text{for all } i, j \neq 3.
\]
 Indeed:
    \begin{align*}
        &\mathcal{P}_3(\Phi_{132} (f_{ij})) = \mathcal{P}_3(\Phi_{132}(\sum_{\substack{p < 3 \\ (p, 3) < (i, j)}} \delta_{ikp3} s_p s_3)) + \mathcal{P}_3(\Phi_{132}(\sum_{\substack{p > 3 \\ (3, p) < (i, j)}} \delta_{ij3p} s_3 s_p)) = \\ &= \sum_{\substack{p < 3 \\ (p, 3) < (i, j)}} \delta_{ikp3} \pi(s_p s_1 s_2) + \sum_{\substack{p > 3 \\ (3, p) < (i,j)}} \delta_{ij3p} s_1 s_2 s_p = \\ &=
        \sum_{\substack{p < 3 \\ (p, 3) < (i, j)}} \delta_{ijp3} \omega_{1p} \pi_0(s_p s_1 s_2) + \sum_{\substack{p > 3 \\ (p, 3) < (i,j)}} \delta_{ij3p} s_1 s_2 s_p = 0 \implies \delta_{ij3p} = \delta_{ijp3} = 0
    \end{align*}
 The projection $\mathcal{P}_4$ onto monomials of degree 4 yields another relation, namely $\delta_{ij33} = 0$. This follows because the only monomial of degree 4 appearing in the expression is
\[
\delta_{ij33} \cdot \Pi(s_1 s_2 s_1 s_2) = \delta_{ij33} \omega_{12} s_1^2 s_2^2.
\]
Similarly, applying $\mathcal{P}_4$ to $\Phi_{254}(f_{ij})$ gives the relation
\[
\delta_{ij55} = 0.
\]
\\
\textit{Step 2}.
\\
It is not feasible to eliminate such a large number of coefficients through general reasoning alone. However, we will demonstrate the principle by which we proceed. As an illustration, we will show how to eliminate $d_{14}$ in two steps.
We have:
\begin{equation*}
    \mathcal{P}_3(\Phi_{132}(f_{23})) = \Pi(s_1 s^2_2 - \omega_{23} s_2 s_1 s_2 - \delta_{2313} s_1 s_1 s_2) = (1 - \omega_{23} \omega_{12}) s_1 s^2_2 - \delta_{2313}s^2_1 s_2 \implies \delta_{2313} = 0  
\end{equation*}
\begin{align*}
    \mathcal{P}_3(\Phi_{132}(f_{34})) = \Pi(s_4 s_1 s_2 - \omega_{34} s_1 s_2 s_4 - \delta_{3413} s^2_1 s_2 - \delta_{3423} s_2 s_1 s_2) = \\ = \Pi(s_4 s_1 s_2) - \omega_{34} s_1 s_2 s_4 - \delta_{3413} s^2_1 s_2 - \delta_{3423} \omega_{12} s_1 s^2_2\\
    \Pi(s_4 s_1 s_2) = \omega_{14} \omega_{24} s_1 s_2 s_4 + (\delta_{1411} + \omega_{14} \delta_{2412}) s^2_1 s_2 + (\delta_{1412} + \omega_{14} \delta_{2422}) s_1 s^2_2 + \\ + \omega_{14} \delta_{2411} s^3_1 + \delta_{1422} s^3_2 + \omega_{14} \delta_{2414} s^2_1 s_4
\end{align*}
From the previous calculations, we observe that several coefficients in $d_{14}$ and $d_{24}$ vanish immediately, namely $\delta_{1422} = \delta_{2414} = \delta_{2411} = 0$.

Next, a similar direct computation yields
\begin{equation*}
\mathcal{P}_3(\Phi_{254}(f_{15})) = \omega^2_{12} \delta_{1411} s^2_1 s_2 + \omega_{12} \delta_{1412} s_1 s^2_2 \implies \delta_{1412} = \delta_{1411} = 0
\end{equation*}
As a consequence, we obtain $d_{14} = 0$.
\\
\textit{Step 3}.
\\
Next, we show that $d_{24} = 0$. We compute
\[
\mathcal{P}_3(\Phi_{254}(f_{25})) = \omega_{12} \delta_{2412} s_1 s^2_2 + \delta_{2422} s^3_2 - \delta_{2515} s_1 s_2 s_4.
\]
We conclude that $\delta_{2412} = \delta_{2422} = \delta_{2515} = 0$, and hence $d_{24} = 0$.

Returning to the previous step, we also find that some of the remaining coefficient equations are now trivially satisfied, namely $\delta_{3423} = \delta_{3413} = 0$.
\\
\textit{Remaining steps}.
\\
We now indicate which automorphisms $\Phi_{ijk}$ should be applied to which generators $f_{kl}$ in order to deduce the vanishing of the remaining $d_{kl}$. As before, we resolve only trivial equations of the form $\delta_{ijkl} = 0$ (using that $\omega_{ij} \ne 0$), and we also revisit previous coefficient equations since simplifications may occur.
\begin{align*}
    \Phi_{254}(f_{35}), \Phi_{123}(f_{35}), \Phi_{165}(f_{16}) \implies d_{15} = 0 \\
    \Phi_{165}(f_{26}) \implies d_{25} = 0 \\
    \Phi_{165}(f_{36}), \Phi_{364}(f_{16}) \implies d_{13} = 0 \\
    \Phi_{364}(f_{26}) \implies d_{23} = 0 \\
    \Phi_{364}(f_{46}) \implies d_{34} = d_{35} = 0 \\
    \Phi_{132}(f_{45}), \Phi_{165}(f_{45}), \Phi_{165}(f_{56}) \implies d_{45} = 0
\end{align*}
\end{proof}

Theorem \ref{thm_general_pbw} significantly simplifies the classification of triangular PBW ideals that are invariant under the automorphisms $\Phi_{ijk}$. It enables us to find the generators $f_{i,j}$ for the most general  triangular ideal $I_{stab}$, that is, an ideal satisfying stability under all maps $\Phi_{ijk}$ but not necessarily the PBW property. The generators of the ideal are given by:
\begin{equation}\label{J-stab}
    \begin{cases}
f_{12} = s_2 s_1 - \omega_1 s_1 s_2, \\ f_{13} = s_3s_1 - \omega_1 s_1 s_3, \\ f_{23} = s_3s_2 - \omega_1^{-1} s_2 s_3,\\

f_{14} = s_4s_1 - \omega_2 \omega_1^{-1} s_1 s_4, \\ f_{24} = s_4 s_2 - \omega_4 s_2 s_4, \\ f_{34} = s_4s_3 - \omega_3 s_3 s_4,\\

f_{15} = s_5s_1 - \omega_2 s_1 s_5, \\ f_{25} = s_5s_2 - \omega_4 s_2 s_5, \\ f_{35} = s_5s_3 - \omega_2 \omega_4 s_3 s_5, \\ f_{45} = s_5s_4 - \omega_4^{-1} s_4 s_5,\\

f_{16} = s_6 s_1 - \left(\omega_2 s_1 s_6 + A_3 s_1^2 + A_1 \omega_1 s_1 s_2 + A_2 s_1 s_4\right),\\

f_{26} = s_6 s_2 - \left(\omega_3 \omega^{-1}_2 s_2 s_6  + A_4 s_1 s_2 + A_6 \omega_1 s_2^2 + A_5 \omega_4 s_2 s_4\right),\\

f_{36} = s_6 s_3 - \left( \omega_3 s_3 s_6 + A_{10} s_1^2 + A_7 s_1 s_2 + \left(A_4 \omega_2 + A_3\right) s_1 s_3  + A_8 s_1 s_4\right) - \\ \qquad - \left(A_{11} s_2^2 + s_2 s_3 \left(A_6 \omega_2+A_1\right) + A_9 s_2 s_4 + \omega_4 s_3 s_4 \left(A_5 \omega_2+A_2\right) + A_{12} s_4^2\right),\\

f_{46} = s_6 s_4 - \left(\omega^{-1}_3 s_4 s_6 + A_{13} s_1 s_4 + A_{14} \omega_2 s_2 s_4 + A_{15} \omega_1 s_4^2\right),\\

f_{56} = s_6 s_5 - \left(\omega^{-1}_2 s_5 s_6 + A_{19} s_1^2 + A_{16} s_1 s_2 + A_{17} s_1 s_4 + \left(A_{13} \omega_4+A_4\right) s_1 s_5 \right) - \\ \qquad - \left(A_{20} s_2^2 + A_{18} s_2 s_4 + \left(A_6 \omega_1+A_{14} \omega_3\right) s_2 s_5 + A_{21} s_4^2 + \left(A_5+A_{15}\right) s_4. s_5\right)
\end{cases}
\end{equation}
In this system, all parameters $\omega_i$ and $A_i$ are arbitrary, except for the constraint  $\omega_1 \omega_3 = \omega_2 \omega_4$.
\medskip

It follows from Theorem \ref{LevThm} that the ideal $I_{stab}$ is PBW if and only if
\begin{equation}\label{pbw1}
       \big(:\,(:s_k\,s_j:)\,  s_i:\big) = \big(:s_k\,(:s_j\,s_i:):\big), \qquad 1\leqslant i<j<k\leqslant 6.
\end{equation}
Conditions \eqref{pbw1} yield a system of 78 linear homogeneous equations for the unknowns \(A_1,\ldots,A_{21}\), with coefficients in the ring \(\mathbb{Z}[\omega_i,\omega_i^{-1}\,;\, i=1,\ldots,4]/(\omega_1\omega_3-\omega_2\omega_4)\). The solvability conditions for this system impose polynomial constraints on the toric parameters \(\omega_i\). The solution set defines an affine variety with several irreducible components. A complete classification of all solutions is challenging and currently beyond our reach.

In this paper, we focus on {\em the generic case}, where none of the parameters \(\omega_i \) is identically equal to $\pm 1$.
The generic case can be reduced to the following four subcases:
\bea\label{c1}
\mbox{\bf Case 1} &:\qquad&   \omega_1 \omega_3 = \omega_2 \omega_4,\\ \label{c2}
\mbox{\bf Case 2} &:\qquad& \omega_1 = \omega_2 = \omega_3 = \omega_4,   \\
\label{c3}
\mbox{\bf Case 3} &:\qquad& \omega_1 = \omega_2,\quad \omega_3 = \omega_4 = \omega_2^2, \\
\label{c4}
\mbox{\bf Case 4} &:\qquad& \omega_1 = \omega_2 = \omega_4^2,\quad  \omega_3 = \omega_4.
\eea
Below we present generating polynomials for the most general $\Phi$-stable and PBW ideals corresponding to each Case.For brevity, we show only the polynomials $f_{i,j},\ j=6$; the generators  $f_{i,j}$ with $j<6$ are all toric and given by \eqref{J-stab} specialized to each case. The arbitrary constants $b_i, C_i$ appearing here are related to, but do not coincide with, the constants $A_i$ in \eqref{J-stab}.
\medskip

\noindent
{\bf Toric ideal $I_0$:}\\ 
Let $I_0$ denote a $\Phi$-stable ideal generated by the polynomials $g_{i,j}$ defined in \eqref{toric_ideal}, with the parameters $\omega_{i,j}$ specified in Lemma~\ref{lem_toric_ideal}:
\begin{equation}\label{I0}
    \left\{ \begin{array}{lll}
g_{12} = s_2 s_1 - \omega_1 s_1 s_2, & 
g_{13} = s_3s_1 - \omega_1 s_1 s_3, & 
g_{23} = s_3s_2 - \omega_1^{-1} s_2 s_3,\\
g_{14} = s_4s_1 - \omega_2 \omega_1^{-1} s_1 s_4, & 
g_{24} = s_4 s_2 - \omega_1\omega_3\omega_2^{-1} s_2 s_4, & 
g_{34} = s_4s_3 - \omega_3 s_3 s_4,\\
g_{15} = s_5s_1 - \omega_2 s_1 s_5, & 
g_{25} = s_5s_2 - \omega_1\omega_3\omega_2^{-1} s_2 s_5, & 
g_{35} = s_5s_3 - \omega_1 \omega_3 s_3 s_5, \\ 
g_{45} = s_5s_4 - \omega_2\omega_1^{-1}\omega_3^{-1} s_4 s_5,&
g_{16} = s_6 s_1 -  \omega_2 s_1 s_6  ,&
g_{26} = s_6 s_2 -  \omega_3 \omega^{-1}_2 s_2 s_6  ,\\
g_{36} = s_6 s_3 -   \omega_3 s_3 s_6 ,&
g_{46} = s_6 s_4 - \omega^{-1}_3 s_4 s_6  ,&
g_{56} = s_6 s_5 -  \omega^{-1}_2 s_5 s_6  .
    \end{array}\right.
\end{equation}
This ideal is parametrised by three independent non-zero parameters $\omega_1, \omega_2$ and $\omega_3$.
\medskip

\noindent
{\bf Ideal $I_1$:}\\
{\bf Case 1} yields the ideal $I_1$,  generated by polynomials $f_{i,j}=g_{i,j}, j<6$ (\ref{I0}), and:
\begin{equation}
    \begin{cases}
f_{16} = s_6 s_1 - \left(\omega_2 s_1 s_6 + b_1 \left(1-\omega_2\right) s_1^2+b_2 \left(\omega_1-\omega_2\right) s_1 s_2 + b_3 (\omega^{-1}_1 - 1) \omega_2 \omega_3 s_1 s_4\right),\\f_{26} = s_6 s_2 - \left(\omega_3 \omega^{-1}_2 s_2 s_6 + b_1 (1 - \omega_1 \omega_3 \omega^{-1}_2) s_1 s_2 + b_2 (1 - \omega_3 \omega^{-1}_2) s^{2}_2 + b_3 (\omega_1 - 1) \omega^{2}_3 \omega^{-1}_2 s_2 s_4 \right),\\f_{36} = s_6 s_3 - \left(\omega_3 s_3 s_6 + b_1 \left(1 - \omega_1 \omega_3\right) s_1 s_3 + b_2 (1 - \omega_3 \omega^{-1}_1) s_2 s_3\right),\\f_{46} = s_6 s_4 - \left(\omega^{-1}_3 s_4 s_6 + b_1 (1 - \omega_2 \omega^{-1}_1 \omega^{-1}_3) s_1 s_4 + b_2 (1 - \omega_1 \omega^{-1}_2) s_2 s_4 + b_3 \left(\omega_3-1\right) s_4^2\right),\\f_{56} = s_6 s_5 - \left(\omega^{-1}_2 s_5 s_6 + b_2 (1 - \omega_1 \omega_3 \omega^{-2}_2) s_2 s_5 + b_3 (\omega_3 - \omega^{-1}_1) s_4 s_5\right),
    \end{cases}
\end{equation}
where $b_1,b_2,b_3$ and  $ \omega_1,\omega_2,\omega_3 $ are arbitrary parameters.

\begin{rem}
Ideal $I_1$ is a deformation of the generic toric ideal $I_0$, it coincides with  $I_0$ if $b_i = 0$.
\end{rem}

\noindent
{\bf Ideal $I_2$:}\\
{\bf Case 2} yields the ideal $I_2$, generated by polynomials $f_{i,j}=g_{i,j}, j<6$ (\ref{I0}), where $\omega_1=\omega_2=\omega_3=\omega$ and
\begin{equation}
 \begin{cases}
f_{16} = s_6s_1 - \left(\omega  s_1 s_6 -C_5 \omega ^2 s_1 s_4+C_1 \omega  s_1^2+\left(C_3-C_2\right) \omega  s_1 s_2\right),\\ f_{26} = s_6s_2 - \left(s_2 s_6 + C_5 \omega ^2 s_2 s_4+C_1 \omega  s_1 s_2+C_2 s_2^2\right),\\ f_{36} = s_6s_3 - \left(\omega  s_3 s_6 + C_1 \omega  \left(\omega +1\right) s_1 s_3+C_3 s_2 s_3\right),\\f_{46} = s_6s_4 - \left(\omega ^{-1} s_4 s_6 + C_5 \omega  s_4^2-C_1 s_1 s_4+C_4 s_2 s_4\right),\\f_{56} = s_6s_5 - \left(\omega ^{-1} s_5 s_6 + C_5 \left(\omega +1\right) s_4 s_5+\left(C_2+C_4\right) s_2 s_5\right),
    \end{cases}
\end{equation}
This ideal depends on arbitrary parameters $C_1,C_2,C_3,C_4,C_5$ and $\omega\ne 0$.

\noindent
{\bf Ideal $I_3$:}\\
{\bf Case 3} yields the ideal $I_3$, generated by polynomials $f_{i,j}=g_{i,j}, j<6$ (\ref{I0}), where $\omega_1=\omega_2=\omega,\ \omega_3=\omega^2$ and
\begin{equation}
     \begin{cases}
f_{16} = s_6s_1 - \left(\omega  s_1 s_6 + C_1 \omega ^2 s_1^2 + C_2 \omega ^3 s_1 s_4\right),\\

f_{26} = s_6s_2 - \left(\omega  s_2 s_6 + C_1 \left(\omega +1\right) \omega ^2 s_1 s_2 + C_4 s_2^2 -C_2 \omega ^4 s_2 s_4\right),\\

f_{36} = s_6s_3 - \left(\omega ^2 s_3 s_6 + C_1 \left(\omega ^2+\omega +1\right) \omega ^2 s_1 s_3+C_3 s_2^2+C_4 s_2 s_3\right),\\

f_{46} = s_6s_4 - \left(\omega ^{-2} s_4 s_6 - C_1 \left(\omega +1\right) s_1 s_4 - C_2 \left( \omega +1\right) \omega  s_4^2\right),\\

f_{56} = s_6s_5 - \left(\omega ^{-1} s_5 s_6 +C_4 s_2 s_5 -C_2 \left(\omega ^2+\omega +1\right) s_4 s_5\right),
    \end{cases}
\end{equation}
This ideal depends on arbitrary parameters $C_1,C_2,C_3,C_4$ and $\omega\ne 0$.

\noindent
{\bf Ideal $I_4$:}\\
{\bf Case 4} yields the ideal $I_4$,  generated by polynomials $f_{i,j}=g_{i,j}, j<6$ (\ref{I0}), where $\omega_1=\omega_2=\omega^2,\ \omega_3=\omega$ and
\begin{equation}
     \begin{cases}
f_{16} = s_6s_1 - \left(\omega ^2 s_1 s_6 - C_3 \left(\omega +1\right) \omega ^3 s_1 s_4+C_1 \left(\omega +1\right){}^2 \omega  s_1^2\right),\\

f_{26} = s_6s_2 - \left(\omega ^{-1} s_2 s_6 + C_1 \left(\omega +1\right) \omega  s_1 s_2+C_2 s_2^2 + C_3 \left(\omega +1\right) \omega ^2 s_2 s_4\right),\\

f_{36} = s_6s_3 - \left(\omega  s_3 s_6 + C_1 \omega  \left(\omega +1\right) \left(\omega ^2+\omega +1\right) s_1 s_3+C_2 s_2 s_3\right),\\

f_{46} = s_6s_4 - \left(\omega ^{-1} s_4 s_6 -C_1 \left(\omega +1\right) s_1 s_4 + C_3 \omega ^2 s_4^2\right),\\

f_{56} = s_6s_5 - \left(\omega ^{-2} s_5 s_6 + C_4 s_2^2+C_2 s_2 s_5 + C_3 \left(\omega ^2+\omega +1\right) s_4 s_5\right) ,   
\end{cases}
\end{equation}
This ideal depends on arbitrary parameters $C_1,C_2,C_3,C_4$ and $\omega\ne 0$.


\begin{rem}
Similar to the case with three variables (see Remark~\ref{remord3}), there exists an alternative monomial ordering corresponding to the involution induced by reflecting the matrix in \eqref{M4} across its antidiagonal. This reflection gives rise to an involution of the generators:
 \[
  s_1 \leftrightarrow s_4,\qquad s_3\leftrightarrow s_5.
 \]
Under this involution, the ideals $I_0, I_1$ and $I_2$ are mapped to similar ideals with a different choice of parameters, while the ideals of type $I_3$ and $I_4$ are interchanged (with transformations of parameters).
\end{rem}

\begin{rem} In this paper, we do not study non-generic ideals corresponding to the case in which at least one of the parameters $\omega_1,\omega_2,\omega_3$ is identically equal to $\pm1$. Thus,  we exclude Cases 1 and 2 from Theorem~\ref{thm-2-1}.
Such non-generic ideals do exist. For example, a non-generic ideal $I_{*}$ is generated by the polynomials
\begin{equation}\label{other-ideal}
\left\{ 
\begin{array}{lll}
 f_{12} = s_2 s_1 - \omega  s_1 s_2,& f_{13} = s_3 s_1 - \omega  s_1 s_3,& f_{23} = s_3 s_2 - \omega^{-1} s_2 s_3,\\ f_{14} = s_4 s_1 - \omega^{ 2} s_1 s_4, &f_{24} = s_4 s_2 - \omega^{-2} s_2 s_4,& 
 f_{34} = s_4 s_3 - s_3 s_4,\\
 f_{15} = s_5 s_1 - \omega^{3} s_1 s_5,& f_{25} = s_5 s_2 - \omega^{-2} s_2 s_5,&
 f_{35} = s_5 s_3 - \omega  s_3 s_5, \\ 
 f_{45} = s_5 s_4 - \omega^{2} s_4 s_5,& f_{16} = s_6 s_1 - \omega^{3} s_1 s_6,&f_{26} = s_6 s_2 - \omega^{-3} s_2 s_6, \\ 
 f_{36} = s_6 s_3 -  s_3 s_6 - C s_4^2 , &f_{46} = s_6 s_4 - s_4 s_6, &f_{56} = s_6 s_5 - \omega^{-3} s_5 s_6,
\end{array}\right.
\end{equation}
This ideal corresponds to the parameter values
\[\omega_1=\omega,\quad \omega_2=\omega^{3},\quad \omega_3=1,\]
and does not arise as a specialisation of any of the generic ideals $I_1,\ldots, I_4$ if the arbitrary parameter $C\ne 0$.
\end{rem}

\begin{rem}
The sequence of re-parametrisations and corresponding  charts $C_i^{\alpha}$ from Section \ref{sec_param} are well defined on the quantum algebra $\mathcal{A}_I$.  This is due to stability of the idel $I$ with respect to the mutations $\Phi_{ijk},\ (i,j,k)\in S$.
\end{rem}

    


\section{Classical limit}

All quantum algebras obtained in this paper (with the exception of Case 2 in Theorem~\ref{thm-2-1}) can be viewed as deformations of commutative polynomial rings. It is well known (first observed by Dirac in 1925 \cite{Dirac25}) that the classical limit of commutators yields Poisson brackets (see, for example, \cite{PLV}). Deformations of noncommutative algebras also give rise to Poisson structures \cite{MV}. However, in this paper we restrict our attention to standard deformations of commutative algebras.

Before presenting the result for the case of the group $N(4,\cA/I)$ , we illustrate the classical limit for the group $N(3,\cAA/I)$ in the   Case 3 of Theorem~\ref{thm-2-1} in detail. The parameters of the ideal $I$  (\ref{case3}) we represent as
\[
 \omega=1+\nu a,\quad \alpha_{11}=\nu b,\quad \alpha_{36}=\nu c,\quad \beta_{33}=\nu d,
\]
where $\nu$ is the deformation parameter and $a,b,c,d\in \C$ are arbitrary parameters. In the case $\nu=0$ the quotient algebra is just a commutative polynomial ring $\cAA/(I|_{\nu=0})=\C[x_1,x_2,x_3]$. We further assume that $x_1$ and $x_3$ are invertible, which allows us to define the localised algebra:
\[\hat{\cAA}_I=\C\langle x_1,x_1^{-1},x_2,x_3,x_3^{-1} \rangle /I.\]   The centre of the algebra $\hat{\cAA}_I$ is generated by the element
\[
 z=x_1^{-1}(a x_2+b x_1 +c x_3+d)x_3^{-1}.
\]
The Poisson brackets for any two elements $a,b$ of $\C [x_1,x_2,x_3]$, corresponding to the classical limit $\nu\to 0$ are defined as
\[
 \{a,b\}=\lim\limits_{\nu\to 0}\frac1{\nu}[a,b]=\sum\limits_{i=1}^3\sum\limits_{j=1}^3  \{x_i,x_j\}\frac{\partial a}{\partial x_i}\frac{\partial b}{\partial x_j},\qquad \{x_i,x_j\}=\lim\limits_{\nu\to 0}\frac1{\nu}[x_i,x_j].
\]
In our case
\bea
 \nn
 \{x_2,x_1\}&=&\frac1{\nu}[x_2,x_1]=a x_2 x_1+b x_1^2+d x_1,\\  
 \{x_3,x_1\}&=&\frac1{\nu}[x_3,x_1]=a x_3 x_1,\\ \nn
 \{x_3,x_2\}&=&\frac1{\nu}[x_3,x_2]=a x_3 x_2+c x_3^2+d x_3.
\eea
Hence, the classical limit yields a Poisson algebra structure on $\C[ x_1,x_1^{-1},x_2,x_3,x_3^{-1} ]$ with the bracket being a sum of four compatible quadratic Poisson brackets, with coefficients $a,b,c,d$. If $a\ne 0$, the parameter  $d$ can be eliminated by a shift of the variable $x_2\to x_2-da^{-1}$. This Poisson bracket has rank two. A Casimir element (i.e., a generator of the Poisson centre) is given by:
\[ 
C =\lim\limits_{\nu\to 0}z=x_1^{-1}(a x_2+b x_1 +c x_3+d)x_3^{-1}.
\]
It is easy to verify that the automorphism $\T$ (\ref{Phi}) is a Poisson map: \[ \T(\{a,b\})=\{\T(a),\T(b)\},\quad \T(C )=C +1.\]

\subsection{Poisson structures on $N(4,\R)$.}
In this section, we list the Poisson brackets that arise as classical limits of the quantisation ideals discussed in Section~\ref{seq-class}. These define Poisson algebra structures on the nilpotent group  $N(4,\R)$, with elements parametrised as
\[
 \left(\begin{array}{llll}
 1 &s_1&s_3&s_6\\
 0&1&s_2&s_5\\
 0&0&1&s_4\\
 0&0&0&1
       \end{array}
\right).
\]
The mutations $\Phi_{i,j,k},\ (i,j,k)\in S$ are Poisson automorpisms of these algebras.
\medskip

\noindent
{\bf Classical limit of $I_0$}.\\
In the case of toric ideal $I_0$, assuming $\omega_k=1+\nu a_k$, in the limit $\nu\to 0$ we obtain  the following Poisson bi-vector $\pi^{(0)}_{i,j}=\{s_i,s_j\}$ with entries:
\[ 
\begin{array}{lll}
 \{s_2,s_1\}= a_1 s_1 s_2,\ &\{s_3,s_1\}= a_1 s_1 s_3,\ &\{s_3,s_2\}= -a_1 s_2 s_3,\\ \{s_4,s_1\}= \left(a_2-a_1\right) s_1 s_4,\ &\{s_4,s_2\}= \left(a_1-a_2+a_3\right) s_2 s_4,\ &\{s_4,s_3\}= a_3 s_3 s_4,\\ \{s_5,s_1\}= a_2 s_1 s_5,\ &\{s_5,s_2\}= \left(a_1-a_2+a_3\right) s_2 s_5,\ &\{s_5,s_3\}= \left(a_1+a_3\right) s_3 s_5,\\ \{s_5,s_4\}= -\left(a_1-a_2+a_3\right) s_4 s_5,\ &\{s_6,s_1\}= a_2 s_1 s_6,\ &\{s_6,s_2\}= \left(a_3-a_2\right) s_2 s_6,\\ \{s_6,s_3\}= a_3 s_3 s_6,\ &\{s_6,s_4\}= -a_3 s_4 s_6,\ &\{s_6,s_5\}= -a_2 s_5 s_6 .
\end{array} 
\]
The Poisson bi-vector  $\pi^{(0)}_{i,j}$ has rank 2 if at least one of the coefficients $a_i$ is non-zero. It is a linear combination of three compatible Poisson bi-vectors with the coefficients $a_1,a_2,a_3$. It represents a three-Hamiltonian structure on  $N(4,\R)$.

\medskip

\noindent
{\bf Classical limit of $I_1$}.\\
The ideal $I_1$ is a deformation of the toric ideal $I_0$. Assuming $\omega_i =1+\nu a_i $, we obtain in the limit $\nu\to 0$  the following Poisson bivector. For $j<i< 6$, the components coincide with the toric bivector $\pi^{(1)}_{i,j}=\pi^{(0)}_{i,j}$,  while the additional components involving  $ s_6 $ are:
\[
\begin{array}{l}
 \{s_6,s_1\}=a_2 s_1 s_6 -a_2 b_1 s_1^2+\left(a_1-a_2\right) b_2 s_1 s_2 - a_1 b_3 s_1 s_4,\\ \{s_6,s_2\}= \left(a_3-a_2\right) s_2 s_6 + \left(a_2-a_3\right) b_2 s_2^2-\left(a_1-a_2+a_3\right) b_1 s_1 s_2 + a_1 b_3 s_2 s_4,\\ \{s_6,s_3\}= a_3 s_3 s_6-\left(a_1+a_3\right) b_1 s_1 s_3 + \left(a_1-a_3\right) b_2 s_2 s_3,\\ \{s_6,s_4\}=-a_3 s_4 s_6+ a_3 b_3 s_4^2 + \left(a_1-a_2+a_3\right) b_1 s_1 s_4 + \left(a_2-a_1\right) b_2 s_2 s_4,\\ \{s_6,s_5\}= -a_2 s_5 s_6 - \left(a_1-2 a_2+a_3\right) b_2 s_2 s_5 + \left(a_1+a_3\right) b_3 s_4 s_5
\end{array}  .
\]
The Poisson bi-vector  $\pi^{(1)}_{i,j}$ has rank two for any choice of   parameters, except when $a_1=a_2=a_3=0$, in which case $\pi^{(1)}=0$. It is a non-trivial deformation of the toric Poisson bivector  $\pi^{(0)}$.

Moreover,  $\pi^{(1)}_{i,j}$ can be written as a linear combination of three compatible Poisson bivectors, with coefficients $a_1,a_2$ and $a_3$. Each of these bivectors  in turn is a linear combination of three compatible bivectors with coefficients $b_1,b_2$ and $b_3$.
\medskip

\noindent
{\bf Classical limit of $I_2$}.\\
In the case of   ideal $I_ 2$, assuming $\omega =1+\nu a $ and $C_i=\nu b_i$, in the limit $\nu\to 0$ we obtain  the following Poisson bi-vector $\pi^{(2)}_{i,j}=\{s_i,s_j\}$ with entries:
\[ 
\begin{array}{ll}
 \{s_2,s_1\}= a s_1 s_2,\ &\{s_3,s_1\}= a s_1 s_3,\\ 
 \{s_3,s_2\}= -a s_2 s_3,\ &\{s_4,s_1\}= 0,\\
 \{s_4,s_2\}= a s_2 s_4,\ &\{s_4,s_3\}= a s_3 s_4,\\
 \{s_5,s_1\}= a s_1 s_5,\ &\{s_5,s_2\}= a s_2 s_5,\\ 
 \{s_5,s_3\}= 2 a s_3 s_5, &\{s_5,s_4\}= -a s_4 s_5,
\end{array}
\]

\[
\begin{array}{l}
    \{s_6,s_1\} = a s_1 s_6 + b_1 s^2_1 + (b_3 - b_2) s_1 s_2 - b_5 s_1 s_4,\\
    \{s_6,s_2\} = b_1 s_1 s_2 + b_2 s^2_2 + b_5 s_2 s_4,\\ \{s_6,s_3\} = a s_3 s_6 + 2 b_1 s_1 s_3 + b_3 s_2 s_3,\\ \{s_6,s_4\} = -a s_4 s_6 - b_1 s_1 s_4 + b_4 s_2 s_4 + b_5 s^2_4,\\
    \{s_6,s_5\} = -a s_5 s_6 + (b_2 + b_5) s_2 s_5 + 2 b_5 s_4 s_5.
\end{array}
\]

The Poisson bi-vector  $\pi^{(2)}_{i,j}$ has rank 4.

\medskip

\noindent
{\bf Classical limit of $I_3$}.\\
In the case of   ideal $I_ 3$, assuming $\omega =1+\nu a $ and $C_i=\nu b_i$, in the limit $\nu\to 0$ we obtain  the following Poisson bi-vector $\pi^{(3)}_{i,j}=\{s_i,s_j\}$ with entries:
\[ 
\begin{array}{ll}
 \{s_2,s_1\}= a s_1 s_2,\ &\{s_3,s_1\}= a s_1 s_3,\\ 
 \{s_3,s_2\}= -a s_2 s_3,\ &\{s_4,s_1\}= 0,\\
 \{s_4,s_2\}= 2 a s_2 s_4,\ &\{s_4,s_3\}= 2 a s_3 s_4,\\
 \{s_5,s_1\}= a s_1 s_5,\ &\{s_5,s_2\}= 2 a s_2 s_5,\\ 
 \{s_5,s_3\}= 3 a s_3 s_5, &\{s_5,s_4\}= -2 a s_4 s_5,
\end{array}
\]

\[
\begin{array}{l}
    \{s_6,s_1\} = a s_1 s_6 + b_1 s^2_1 + b_2 s_1 s_4,\\
    \{s_6,s_2\} = a s_2 s_6 + 2 b_1 s_1 s_2 + b_4 s^2_2 - b_2 s_2 s_4,\\ \{s_6,s_3\} = 2 a s_3 s_6 + 3 b_1 s_1 s_3 + b_3 s^2_2 + b_4 s_2 s_3,\\ \{s_6,s_4\} = -2 a s_4 s_6 - 2 b_1 s_1 s_4 - 2 b_2 s^2_4,\\
    \{s_6,s_5\} = -a s_5 s_6 + b_4 s_2 s_5 - 3 b_2 s_4 s_5.
\end{array}
\]
The Poisson bi-vector  $\pi^{(4)}_{i,j}$ has rank 4.

\medskip

\noindent
{\bf Classical limit of $I_4$}.\\
In the case of   ideal $I_ 4$, assuming $\omega =1+\nu a $ and $C_i=\nu b_i$, in the limit $\nu\to 0$ we obtain  the following Poisson bi-vector $\pi^{(4)}_{i,j}=\{s_i,s_j\}$ with entries:
\[ 
\begin{array}{ll}
 \{s_2,s_1\}= 2 a s_1 s_2,\ &\{s_3,s_1\}= 2 a s_1 s_3,\\ 
 \{s_3,s_2\}= -2 a s_2 s_3,\ &\{s_4,s_1\}= 0,\\
 \{s_4,s_2\}= a s_2 s_4,\ &\{s_4,s_3\}= a s_3 s_4,\\
 \{s_5,s_1\}= 2 a s_1 s_5,\ &\{s_5,s_2\}= a s_2 s_5,\\ 
 \{s_5,s_3\}= 3 a s_3 s_5, &\{s_5,s_4\}= -a s_4 s_5,
\end{array}
\]

\[
\begin{array}{l}
    \{s_6,s_1\} = 2 a s_1 s_6 + 4 b_1 s^2_1 - 2 b_3 s_1 s_4,\\
    \{s_6,s_2\} = -a s_2 s_6 + 2 b_1 s_1 s_2 + b_2 s^2_2 + 2 b_3 s_2 s_4,\\ \{s_6,s_3\} = a s_3 s_6 + 6 b_1 s_1 s_3 + b_2 s_2 s_3,\\ \{s_6,s_4\} = -a s_4 s_6 - 2 b_1 s_1 s_4 + b_3 s^2_4,\\
    \{s_6,s_5\} = -2 a s_5 s_6 + b_4 s^2_2 + b_2 s_2 s_5 + 3 b_3 s_4 s_5.
\end{array}
\]
The Poisson bi-vector  $\pi^{(4)}_{i,j}$ has rank 4.
\medskip

\begin{rem}
Considering the connection of our problem with the family of parametrisations of the group of unipotent matrices (Section \ref{sec_param}), we can say that the obtained Poisson structures are structures consistent with the re-parametrisations on the charts of the unipotent group.
\end{rem}

\begin{rem}
    Consider the classical limit in the case of the non-generic ideal \eqref{other-ideal} $P_{*}$ yields  the Poisson bi-vector $\pi^{(*)}$ of rank 4.
\end{rem}

   \[
  \pi^{(*)}:\quad   \begin{array}{lll}
        \{s_1, s_2\} =     s_1 s_2, &\{s_1, s_3\} =     s_1 s_3, &\{s_1, s_4\} = 2  s_1 s_4, \\ \{s_1, s_5\}  = 3  s_1 s_5, &\{s_1, s_6\} = 3   s_1 s_6,& \{s_2, s_3\} = -    s_2 s_3, \\ \{s_2, s_4\}  = - 2 s_2 s_4,& \{s_2, s_5\} = - 2 s_2 s_5,& \{s_2, s_6\} = -    s_2 s_6, \\ \{s_3, s_4\}  = 0, & \{s_3, s_5\} =     s_3 s_5,& \{s_3, s_6\} = -C s^2_4, \\ \{s_4, s_5\}  =  2 s_4 s_5,& \{s_4, s_6\} = 0,& \{s_5, s_6\} = -    s_5 s_6.
    \end{array}
\]
\subsection{ Poisson centres and commuting integrals}

In the case of the log-canonical Poisson brackets with the bivector $\pi^{(0)}$, corresponding to the toric case, the Casimir elements - i.e., functions lying in the Poisson centre - can be computed explicitly. These elements take the form of monomials
\[
 \cC=s_1^{\alpha_1} s_2^{\alpha_2}\cdots s_6^{\alpha_1}
\]
where the exponent vector $(\alpha_1,\alpha_2,\ldots,\alpha_6)\in\C^6$ lies in the kernel of $\pi^{(0)}$.

In our case, one can choose the following monomials as Casimir functions:
\begin{equation}\label{cCs}
 \cC_1=s_1^{a_1-a_2+a_3}s_2^{a_1-a_2} s_4^{a_1},\quad \cC_2=s_1^{-1}s_2^{-1}s_3,\quad \cC_3=s_2^{-1}s_4^{-1}s_5,\quad \cC_4=s_1^{-1}s_2^{-1}s_4^{-1}s_6.
\end{equation}
These Casimir functions Poisson commute with all elements of the algebra and generate the Poisson centre of the Poisson algebra. They determine the symplectic foliation of the corresponding Poisson manifold  and are invariant under the Hamiltonian flows generated by any regular function in the algebra.

The mutations $\Phi_{ijk}$, for $(i,j,k)\in S$,  are Poisson maps for the Poisson algebras described in the previous section.
Moreover, the action of these Poisson maps, on the Casimir functions $\cC_i$ is affine-linear:
\begin{equation}\label{mutC}
 \begin{array}{ll}
  \Phi_{132} (\cC_i)=\cC_i+\delta_{i,2},&  \Phi_{254} (\cC_i)=\cC_i+\delta_{i,3}\\
   \Phi_{364} (\cC_i)=\cC_i+\delta_{i,4}\cC_2,&  \Phi_{165} (\cC_i)=\cC_i+\delta_{i,4}\cC_3,
 \end{array}
\end{equation}
where $\delta_{ij}$ denotes the Kronecker delta symbol. They equip the symplectic foliation with invariant lattice structure which we refer to as the {\em symplectic lattice}.

The variables
$s_1,s_2$ and $s_4$ remain invariant under the action of these maps. Any regular function of these variables, when taken as a Hamiltonian, generates integrable dynamics on the symplectic lattice, which is preserved by the Poisson maps.

A similar picture arises in the case of the deformation $\pi^{(1)}$
of the Poisson bivector $\pi^{(0)}$. In these case the Casimir elements $\cC_1,\cC_2$ and $\cC_3$ remain undeformed, while the element $\cC_4$ acquires a deformation:
\[\cC_4^*=s_1^{-1}s_2^{-1}s_4^{-1}(s_6-b_1 s_1-b_2 s_2-b_3 s_4).\]
The mutation rules (\ref{mutC}) remain unchanged under this deformation.

The cases corresponding to Poisson bivectors $\pi^{(2)},\pi^{(3)},\pi^{(4)}$ differ from the situation described above. While each of these bivectors arises as a deformation of a  particular case of the log-canonical Poisson bracket $\pi^{(0)}$, the deformation generally changes the structure in a significant way. In particular, it reduces the rank of the Poisson bivector to two, thus changing the symplectic foliation of the underlying manifold.

The Casimir functions generating the Poisson centre in these cases are:
\[
 \pi^{(2)} : \ \cC_2,\ \cC_3;\quad \pi^{(3)}: \ \cC_1'=s_1^2s_4,\ \cC_3;\quad \pi^{(4)}: \ \cC_1''=s_1s_4^2,\ \cC_2,
\]
where $\cC_2,\cC_3$ are as defined above in~\eqref{cCs}, and $\cC_1',\cC_1''$ arise from $\cC_1$ if we take into account the relations (\ref{c3}),(\ref{c4}). The functions $\cC_1',\cC_1''$ are mutation invariant.

In these cases, the symplectic leaves are four-dimensional, so to define a Liouville integrable Hamiltonian system, we must find two Poisson commuting functions. By a straightforward calculation we get:

\begin{prop}
The following pairs of functions Poisson commute with respect to the corresponding Poisson brackets:
\begin{enumerate}
  \item The functions $\cC_1'$ and $\cC_1''$ Poisson commute with respect to the bivector $\pi^{(2)}$.
  \item The functions $\cC_1''$ and $\cC_2$ Poisson commute with respect to the bivector $\pi^{(3)}$.
  \item The functions $\cC_1'$ and $\cC_3$ Poisson commute with respect to the bivector $\pi^{(4)}$.
 \end{enumerate}
\end{prop}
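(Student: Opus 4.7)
The plan is to verify each of the three statements by direct application of the Leibniz rule to the tabulated Poisson brackets for $\pi^{(2)}$, $\pi^{(3)}$, $\pi^{(4)}$. Since the four functions $\cC_1',\cC_1'',\cC_2,\cC_3$ are Laurent monomials in $s_1,\ldots,s_5$ only, I work in the localisation obtained by inverting $s_1,s_2,s_4$; the Leibniz rule extends each bivector uniquely to this Laurent ring. A first useful observation is that none of our four functions involves $s_6$, so the entire $b$-deformation of each bivector (which only enters brackets of the form $\{s_6,s_j\}$) plays no role, and only the log-canonical part $\{s_i,s_j\}\propto s_i s_j$ for $i,j\le 5$ is needed.

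For part (i) the argument is immediate. Both $\cC_1'=s_1^{2} s_4$ and $\cC_1''=s_1 s_4^{2}$ depend only on $s_1$ and $s_4$, and the table for $\pi^{(2)}$ records $\{s_4,s_1\}=0$. The Leibniz expansion of $\{\cC_1',\cC_1''\}_{\pi^{(2)}}$ involves only the bracket $\{s_1,s_4\}$, hence vanishes identically.

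For parts (ii) and (iii) I expand explicitly. In each bivector one has $\{s_1,s_4\}=0$, so only the cross brackets survive. For (ii), the expansion
\begin{equation*}
\{\cC_1'',\cC_2\}_{\pi^{(3)}}=\sum_{i\in\{1,4\},\,j\in\{2,3\}}\{s_i,s_j\}\,\partial_{s_i}\cC_1''\,\partial_{s_j}\cC_2
\end{equation*}
substitutes the values $\{s_1,s_2\}=-as_1s_2$, $\{s_1,s_3\}=-as_1s_3$, $\{s_4,s_2\}=2as_2s_4$, $\{s_4,s_3\}=2as_3s_4$; each of the four summands reduces to a rational multiple of the single Laurent monomial $a\,s_2^{-1}s_3 s_4^{2}$ with numerical coefficients $+1,-1,-4,+4$ respectively, which cancel. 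For (iii), the analogous expansion of $\{\cC_1',\cC_3\}_{\pi^{(4)}}$ over the pairs $(i,j)\in\{1,4\}\times\{2,5\}$, with the $\pi^{(4)}$--entries $\{s_1,s_2\}=-2as_1s_2$, $\{s_1,s_5\}=-2as_1s_5$, $\{s_4,s_2\}=as_2s_4$, $\{s_4,s_5\}=as_4s_5$, gives four multiples of $a\,s_1^{2}s_2^{-1}s_5$ with coefficients $+4,-4,-1,+1$, again summing to zero.

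The computation is routine; the only real point of care is sign bookkeeping, since the tables list the brackets in the order $\{s_i,s_j\}$ with $i>j$, while the Leibniz expansion naturally uses the opposite order in several places. No conceptual obstacle appears, and the vanishing in each case is a two-line verification once the derivatives of the Laurent monomials and the tabulated brackets are inserted.
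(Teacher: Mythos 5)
Your proof is correct and takes essentially the same route as the paper, which dismisses the verification as a "straightforward calculation." Your observation that the $b$-deformations affect only brackets involving $s_6$ (and hence drop out since the four Casimir candidates are Laurent monomials in $s_1,\ldots,s_5$) is a clean reduction, and the Leibniz-rule expansions with the coefficient bookkeeping $(+1,-1,-4,+4)$ and $(+4,-4,-1,+1)$ check out against the tabulated bivectors.
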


\section{Conclusion}

The main results of this paper concern quantum reductions of a noncommutative reparametrisation map on the unipotent group $N(4,\cA)$, where $\cA=\C\left<x_1,\ldots, x_6\right>$ is the free associative algebra. This construction yields quantum solutions to the Zamolodchikov tetrahedron equation.

\begin{itemize}
\item Using the method of quantisation ideals, we construct several families of associative algebras $\cA_{\cI}$ that are PBW deformations of the polynomial ring $\C[x_1,\ldots, x_6]$.  These deformations admit a well-defined quantum reductions of the re-parametrisation map to the unipotent group $N(4,\cA_{\cI})$, thereby providing quantum solutions to the Zamolodchikov tetrahedron equation.

\item   We examine the classical limit of these associative algebras and obtain a family of Poisson bivectors on the space of the unipotent group $N(4,\R)$, invariant under the re-parametrisation maps. Analogous problems have been extensively studied in the setting of cluster varieties, both in the Poisson and quantum contexts; see, for example, \cite{GSV}, \cite{BZ}.

\item We identify Hamiltonian integrable systems that are consistent with the mutation maps. Their solutions yield continuous symmetry groups acting on the parametrisations of the unipotent group. We expect these systems admit quantisations that respect mutation invariance.
\end{itemize}

We anticipate that the methods developed here will have broader applicability to a wide class of discrete dynamical systems with algebraic structure. In particular, we expect generalisations to the Lusztig variety, to mutation dynamics in electrical network models, to the Ising model, and to an expanding family of examples within the theory of cluster manifolds.

\section*{Acknowledgements}
The work of M.С. is an output of a research project implemented as part of the Basic Research Program at the National Research University Higher School of Economics (HSE University).
The work of D.T. was carried out within the framework of a development programme for the Regional Scientific and Educational Mathematical Center of Yaroslavl State University, with financial support from the Ministry of Science and Higher Education of the Russian Federation (Agreement on provision of subsidy from the federal budget No. 075-02-2025-1636).

\end{document}